\newtheorem{lemma}{Lemma}
\newtheorem{proposition}{Proposition}
\newtheorem{definition}{Definition}
\begin{document}
\title{OMP Based Joint  Sparsity Pattern  Recovery  Under Communication Constraints}
\author{\authorblockA{Thakshila Wimalajeewa,   \emph{Member IEEE}  and Pramod K.
Varshney, \emph{Fellow IEEE}}
}

\maketitle \thispagestyle{empty}

\begin{abstract}
We address the problem of joint sparsity pattern recovery   based on low dimensional   multiple
measurement vectors (MMVs) in resource constrained  distributed networks. We assume  that distributed  nodes observe sparse signals which  share the same sparsity pattern  and each node obtains   measurements  via a low dimensional linear operator. When the measurements are  collected at distributed nodes in a communication network, it is often required that  joint sparse recovery  be performed under inherent resource  constraints such as   communication bandwidth and transmit/processing  power.
 We present  two approaches  to take the  communication constraints into account while performing common sparsity pattern recovery. First, we explore the use of  a shared multiple access channel (MAC)  in  forwarding  observations  residing at each node to  a  fusion center.   With MAC, while  the bandwidth requirement does
not depend on the number of nodes, the fusion center has access to  only a  linear combination of the observations. We discuss the conditions under which the common sparsity pattern can be estimated reliably.
Second, we develop  two collaborative algorithms based on Orthogonal Matching Pursuit (OMP), to jointly estimate the common  sparsity pattern  in a decentralized manner with a low communication overhead.
In  the proposed algorithms, each node exploits collaboration among neighboring  nodes by sharing   a small amount of information   for  fusion  at different stages in  estimating  the indices of the true support  in a greedy manner.  Efficiency and effectiveness of the proposed algorithms are demonstrated via simulations  along with a comparison with  the most related existing algorithms considering the trade-off between the performance gain and the communication overhead.
\end{abstract}

\begin{keywords}
Multiple measurement vectors, Sparsity pattern recovery, Compressive Sensing, Orthogonal matching pursuit (OMP), Decentralized algorithms
\end{keywords}
\footnotetext[1]{Authors are with the Department of Electrical Engineering and Computer Science, Syracuse University, Syracuse NY 13244. Email: twwewelw@syr.edu, varshney@syr.edu}
\footnotetext[2]{This material is based upon work supported by the National Science
Foundation (NSF) under Grant No. 1307775}
\footnotetext[3]{A part of this work was presented at ICASSP 2013, in Vancouver, Canada with the title 'Cooperative Sparsity Pattern Recovery in Distributed  Networks Via Distributed-OMP'}

\section{Introduction}
 The problem of joint  recovery of sparse signals  based on reduced dimensional multiple measurement vectors (MMVs) is very important  with   a variety of applications including sub-Nyquist sampling of multiband signals  \cite{Mishali1},  multivariate regression \cite{Obozinski1}, sparse signal recovery with multiple sensors \cite{Baron1}, spectrum sensing in cognitive radio networks with multiple cognitive radios \cite{ling1,Zeng1,Bazerque1}, neuromagnetic imaging  \cite{Gorodnitsky2,Cotter1},  and medical imaging \cite{Lee2}.   While MMV problems have been traditionally addressed using sensor
array signal processing techniques, they have attracted considerable recent attention in the context of \emph{compressive sensing} (CS).  When all the low dimensional  multiple measurements are  available at a central processing unit, recent  research efforts have  considered  extending algorithms developed for sparse recovery with a  single measurement vector (SMV) in the context of CS to  joint recovery with MMVs  and discuss  performance guarantees \cite{Tropp4,Tropp3,Cotter1,Chen2,Obozinski2,Wipf2,Eldar4,Eldar1}.

Application  of random  matrices  to obtain low dimensional (compressed)  measurements  at distributed nodes in sensor and cognitive radio networks was explored   in \cite{Baron1,ling1,Zeng1,Bazerque1}.  When low dimensional measurements are collected at multiple nodes in a distributed network, to perform simultaneous  recovery of sparse signals in the optimal way,  it is required that each node transmit measurements and the information regarding the low dimensional sampling operators to  a  central processing unit.
However, this centralized processing   requires large communication burden in terms of communication bandwidth and transmit power. In  sensor and cognitive radio networks, power and communication bandwidth are limited and long range communication may be  prohibitive due to the requirement of  higher  transmit power.  Thus, there is a need for the development of sparse  solutions  when all the   measurements are not available  at a central fusion center.  In this paper, our goal is to investigate  the problem of joint  sparsity pattern recovery with multiple measurement vectors utilizing  minimal amount of transmit power and communication bandwidth.

\subsection{Motivation and our contributions}
We consider a distributed network in which multiple nodes observe  sparse signals that  share the same sparsity pattern. Each node is assumed to make measurements  by employing  a low dimensional linear projection  operator.  The goal is to recover the common sparsity  pattern jointly  under communication constraints.   We present two approaches in this paper. The first  approach  that takes  communication constraints into account in  providing   a centralized solution for sparsity pattern recovery  with MMV   is to transmit only a summary or a function of the observations to a fusion center instead of transmitting raw observations. Use of shared multiple access channels (MAC) in forwarding observations to a fusion center is attractive in certain distributed networks whose bandwidth  does not depend on the number of transmitting nodes \cite{mergen1,li1}.
With the MAC output, the fusion center  does not have  access to individual observation vectors in contrast to employing a bank of parallel channels.  However, due  to the gain in communication bandwidth of MAC over parallel communication, MAC is attractive in applications where  communication bandwidth is limited \cite{mergen1,li1}. We discuss the conditions under which the common sparsity pattern can be estimated reliably. Further, we show that  the performance is  comparable under certain conditions to the case where all the observation vectors  are available at the fusion center separately.

In the second approach, we develop two decentralized    algorithms    with low communication overhead  to estimate  the common  sparsity pattern by appropriate collaboration and fusion among nodes. In addition to the low communication cost, decentralized algorithms are attractive since they   are robust to node and link failures.   In contrast to existing decentralized solutions, we consider    a greedy approach, namely, orthogonal matching pursuit (OMP) for joint sparsity pattern recovery.  Standard OMP with a SMV was introduced in \cite{tropp1}. Simultaneous OMP (S-OMP) for MMVs when all the measurements with a common measurement matrix are available  at a central processing unit   was  discussed in \cite{Tropp2} while in \cite{Baron1} authors extended it to the case where multiple measurements are collected via different measurement matrices.

The two proposed algorithms, called DC-OMP 1 and DC-OMP 2,  perform  sparsity pattern recovery in a greedy manner with collaboration and fusion at each iteration.  In the first  algorithm DC-OMP 1,   single-stage fusion is performed at each iteration via only \emph{one hop communication} in estimating the sparsity pattern. DC-OMP 2, is shown to have performance that is close to S-OMP performed in a centralized manner when a reasonable neighborhood size is employed.    Although DC-OMP 1 lacks in  performance,  the communication overhead required by DC-OMP 1  is  much less  compared to DC-OMP 2. However, in terms of  performance, DC-OMP 1 still provides a significant gain compared to the case where each node performs  standard OMP independently to get an estimate of  the complete support set and then the results are fused  to get a global estimate.     We further show that, in  both  DC-OMP 1 and DC-OMP 2, each node has to perform  less number of iterations  compared to the sparsity index of the sparse signal to reliably estimate the sparsity pattern.  (in both OMP and S-OMP,  at least $k$  iterations are required  where $k$ is the sparsity index of the sparse signals).

This work is based on our preliminary  work  presented in \cite{thakshila_icassp1}. 
In \cite{thakshila_icassp1}, we provided the algorithm development of DC-OMP 1 for common sparsity pattern recovery. In this paper, we significantly extend the work by (i). proposing a second decentralized algorithm for joint sparsity pattern recovery  named DC-OMP 2 which provides a larger  performance gain compared to  DC-OMP 1 at the price of additional  communication cost, (ii). providing theoretical analysis to show the performance gain achieved by DC-OMP 1 and DC-OMP 2 via  collaboration  and fusion compared to  that  without any collaboration among nodes, and  (iii). presenting a new joint sparsity pattern recovery approach  based on the MAC output.

\subsection{Related work}
While most of the work related to joint sparsity pattern recovery with  multiple measurement vectors assumes that the measurements are available at a central fusion center \cite{Tropp3,Tropp4,Cotter1,Kim1,Chen2,Wipf2},  there are several recent research efforts in \cite{ling1,Zeng1,Bazerque1,Ling2,thakshila_icassp1,Rabbat1,Patterson1}, that address the problem of  joint sparse recovery  in a decentralized manner where nodes communicate  with only their neighboring  nodes. The problem of spectrum sensing in multiple cognitive radio networks is cast as a sparse recovery problem in \cite{ling1,Zeng1,Bazerque1} and the authors  propose decentralized solutions based on  optimization techniques. However, performing optimization at sensor nodes may be computationally prohibitive. Further,  in these approaches, each node needs to  transmit the whole estimated vector to its neighbors  at each iteration.    In \cite{Patterson1}, a distributed version of iterative hard thresholding (IHT) is developed for sparse signal recovery in sensor networks in which computational complexity per node is less compared to optimization based  techniques. In a recent work \cite{Sundman_icassp1}, the authors have proposed a decentralized version of a subspace pursuit algorithm for distributed CS. While this paper was under review, another paper appeared  that also  considers a decentralized version of OMP, called  DiOMP,  for sparsity pattern recovery \cite{Sundman_arxiv1} which is different from DC-OMP 1 and  DC-OMP 2 proposed in this paper. In DC-OMP 1 and DC-OMP 2, the goal in performing collaboration is to improve the accuracy via fusion  for  common support recovery compared to the standard OMP, whereas the main idea in DiOMP is to exploit the collaboration to identify the common support when the sparse signals have common plus independent supports among multiple sparse signals.

The rest of the paper is organized as follows.
In Section \ref{motivation}, the observation model, and  background  on sparsity pattern recovery  with a central fusion center   are presented. In Section \ref{MAC_recovery}, the problem of joint sparsity pattern recovery with multiple access communication is discussed.
In Section \ref{decentralized_recovery}, two decentralized greedy  algorithms based on OMP are proposed for joint  sparsity pattern recovery.
    Numerical results are presented in Section \ref{numerical} and concluding remarks are given in Section \ref{conclusions}.

\section{Problem Formulation}\label{motivation}
We consider a scenario where  $L$ nodes in a distributed network observe  sparse signals. More specifically, each node makes measurements of   a  length-$N$ signal denoted by  $\mathbf x_l$ which is assumed to be sparse in the basis $\boldsymbol\Phi$. Further, we assume that all signals $\mathbf x_l$'s for $l=0,1,\cdots, L-1$ are sparse with a common sparsity pattern in the same basis. Let $\mathbf x_l = \boldsymbol\Phi \mathbf s_l$ where $\mathbf s_l$ contains only $k \ll N$ significant coefficients and the locations of the significant coefficients are the same for all $\mathbf s_l$  for $l=0,1,\cdots, L-1$.
A practical situation well modeled by this joint sparsity pattern  in distributed networks is where multiple sensors in a sensor network  acquire the same Fourier sparse signal but with phase shifts and attenuations
caused by signal propagation \cite{Baron1}.  Another application is estimation of  the sparse signal spectrum in a cognitive radio  network with multiple cognitive radios \cite{ling1,Zeng1}.

We assume  that each node obtains a compressed version of the sparse  signal observed via the following linear system,
\begin{eqnarray}
\mathbf y_l&=&\mathbf  A_l \mathbf x_l + \mathbf v_l;\label{obs_1}
\end{eqnarray}
for $l=0,1,\cdots, L-1$
where  $\mathbf A_l$ is the $M \times N$  projection matrix, $M$ is the number of compressive measurements  with $M < N$, and $\mathbf v_l$ is the measurement noise vector  at the $l$-th  node. The noise vector $\mathbf v_l$ is assumed to be iid Gaussian with zero mean vector and the covariance matrix $\sigma_{v}^2 \mathbf I_M$ where $\mathbf I_M$ is the $M\times M $ identity matrix.


\subsection{Sparsity pattern recovery}\label{sparse_recovery}
Our goal is  to jointly estimate the common sparsity pattern of $\mathbf x_l$ for $l=0,1,\cdots, L-1$ based on the underdetermined linear system (\ref{obs_1}). Once the sparse support is known, the problem of estimating the coefficients can be cast as a linear estimation problem and standard techniques such as the least squares method can be employed to estimate the  amplitudes of the non zero coefficients. Further, in certain applications including spectrum sensing in cognitive radio networks, it is sufficient only to identify the sparse support.

 Let $\mathbf B_l = \mathbf A_l \boldsymbol\Phi$ so that $\mathbf y_l$ can be expressed as
 \begin{eqnarray}
 \mathbf y_l = \mathbf B_l \mathbf s_l + \mathbf v_l
   \end{eqnarray}
   for $l=0,1,\cdots, L-1$. Define the support set of $\mathbf s_l$,   $\mathcal U$, to be the set which contains the indices of locations of non zero   coefficients in  $\mathbf s_l$ at $l$-th node:
\begin{eqnarray}
\mathcal U := \{i\in \{0,1,\cdots,N-1\}~|~ \mathbf s_{l} (i) \neq 0 \}
\end{eqnarray}
where $\mathbf s_{l} (i)$ is the $i$-th element of $\mathbf s_{l}$ for $i=0,1,\cdots, N-1$ and $l=0,1,\cdots, L-1$. Then we have $k=|\mathcal U|$ where $|\mathcal U|$ denotes the cardinality of  the set $\mathcal U$. It is noted that the sparse support is denoted by the same notation $\mathcal U$ at each node  due to the assumption of common sparsity pattern.
 Further, let $\boldsymbol \zeta$ be a length-$N$ vector which contains binary elements:
i.e.
\begin{eqnarray}
\boldsymbol \zeta(i) = \left\{
\begin{array}{ccc}
1~ & if~ \mathbf s_l(i) \neq 0\\
0 ~& \mathrm{otherwise}
\end{array}\right.\label{beta_def}
\end{eqnarray}
for any $l$ and  $i=0,1,\cdots, N-1$. In other words, elements in $\boldsymbol \zeta$ are $1$'s corresponding to indices in $\mathcal U$ while all other elements are zeros.  The goal of sparsity pattern recovery is to estimate the set $\mathcal U$ (or the vector $\boldsymbol \zeta$).

\subsection{Sparsity pattern recovery via OMP in a centralized setting}
To make all the observation vectors $\mathbf y_l$ for $l=0,1,\cdots, L-1$  available at a fusion center, a bank of dedicated parallel access channels (PAC) which are independent across nodes has to be used.  With PAC, the observation matrix at the fusion center can be written in the form,
\begin{eqnarray}
\mathbf Y = \tilde {\mathbf S} + \mathbf V\label{obs_PAC_2}
\end{eqnarray}
where $\mathbf Y = [\mathbf y_0, \cdots, \mathbf y_{L-1}]$,  $\tilde {\mathbf S} = [\mathbf B_0\mathbf s_0, \cdots, \mathbf B_{L-1}\mathbf s_{L-1}]$ and $\mathbf V = [\mathbf v_0, \cdots, \mathbf v_{L-1}]$. In the special case where $\mathbf A_l = \mathbf A$ for $l=0,1,\cdots, L-1$, the PAC output can be expressed as,  \begin{eqnarray}
\mathbf Y = \mathbf B \mathbf S + \mathbf V\label{obs_PAC}
\end{eqnarray}
where $\mathbf S = [\mathbf s_{0}, \cdots, \mathbf s_{L-1}]$  and $\mathbf B = \mathbf A \boldsymbol \Phi $.  It is noted that the model (\ref{obs_PAC}) is the widely used form for simultaneous sparse approximation with  MMV which has been studied quite extensively \cite{Cotter1,Tropp3,Tropp4,Chen2,Kim1}.

 \begin{algorithm}
Inputs: $\mathbf y$, $\mathbf B$, $k$
\begin{enumerate}
\item Initialize $t=1$, $\hat{\mathcal U}(0) = \emptyset $, residual vector $\mathbf r_{0} = \mathbf y$
\item Find the index $\lambda(t)$ such that
$
\lambda(t) = \underset{\omega = 0,\cdots,N-1}{\arg~ \max} ~ {|\langle\mathbf r_{t-1}, \mathbf b_{\omega}\rangle|}
$
 \item  Set $\hat{\mathcal U}(t) = \hat{\mathcal U}(t-1) \cup \{\lambda (t)\}$
\item  Compute the projection operator $\mathbf P(t) = \mathbf B(\hat{\mathcal U}(t)) \left(\mathbf B(\hat{\mathcal U}(t)) ^T \mathbf B(\hat{\mathcal U}(t)) \right)^{-1} \mathbf B(\hat{\mathcal U}(t)) ^T$. Update the residual vector:  $\mathbf r_{t} = (\mathbf I - \mathbf P(t))\mathbf y$ (note: $\mathbf B(\hat{\mathcal U}(t))$ denotes the submatrix of $\mathbf B$ in which columns are taken from $\mathbf B$ corresponding to the indices in $\hat{\mathcal U}(t)$)
     \item  Increment $t=t+1$ and go to step 2 if $t\leq k$, otherwise, stop and set $\hat{\mathcal U} = \hat{\mathcal U}(t-1)$

 \end{enumerate}
 \caption{Standard OMP  with SMV}\label{algo0}
 \end{algorithm}
While  optimization techniques for sparse signal recovery such as $l_1$ norm minimization     provide  more promising results in terms  of accuracy, their  computational complexity is higher than that of   greedy techniques such as OMP. With a single measurement vector, the computational complexity with the best known convex optimization based algorithm  is in the order of  $\mathcal O(M^2 N^{3/2})$ while the complexity with OMP is in the order of $\mathcal O(MN  k )$ \cite{Huang_arxiv1}.  Further, it was shown in \cite{Fletcher2} that for sparsity pattern recovery with large Gaussian measurement
matrices in high signal-to-noise ratio (SNR) environments, $l_1$ norm based  Lasso  and OMP have almost identical performance.
 The standard OMP as presented in Algorithm \ref{algo0} is developed in \cite{tropp1} for the SMV case. We omit the subscripts of the vectors and matrices whenever we consider the SMV case. In OMP,  at each iteration, the location of one non zero coefficient of the sparse signal (or the index of a  column of $\mathbf B$ that participates in the measurement vector $\mathbf y$) is  identified. 
 Then the selected column's  contribution is subtracted from $\mathbf y$ and iterations  on the residual are carried out.

\begin{algorithm}
Inputs: Inputs: $\{\mathbf y_l$, $\mathbf B_l\}_{l=0}^{L-1}$, $k$
\begin{enumerate}
\item Initialize $t=1$, $\hat{\mathcal U}(0) = \emptyset $, residual vector $\mathbf r_{l,0} = \mathbf y_l$
\item Find the index $\lambda(t)$ such that
$
\lambda(t) = \underset{\omega = 0,\cdots,N-1}{arg~ max} ~ \sum_{l=0}^{L-1}{{|\langle\mathbf r_{l,t-1}, \mathbf b_{l,\omega}\rangle| }}
$
 \item  Set $\hat{\mathcal U}(t) = \hat{\mathcal U}(t-1) \cup \{\lambda(t)\}$
\item
 Compute the orthogonal projection operator:
 $\mathbf P_l(t) =  \mathbf B_l(\hat{\mathcal U}(t)) \left(\mathbf B_l(\hat{\mathcal U}(t)) ^T \mathbf B_l(\hat{\mathcal U}(t)) \right)^{-1} \mathbf B_l(\hat{\mathcal U}(t)) ^T$\\
Update the residual:  $\mathbf r_{l,t} = (\mathbf I - \mathbf P_l(t))\mathbf y_l$
     \item  Increment $t=t+1$ and go to step 2 if $t\leq k$, otherwise, stop and set $\hat{\mathcal U} = \hat{\mathcal U}(t-1)$

 \end{enumerate}
 \caption{S-OMP  with   different projection matrices}\label{algo_SOMP}
 \end{algorithm}

With MMV when all the measurement vectors are  sampled via the same projection matrix as in (\ref{obs_PAC}),   the support of the sparse signal can be estimated,  using  simultaneous OMP (S-OMP)  algorithm as presented in \cite{Tropp2}. An extension of S-OMP to estimate the  common sparsity pattern with MMV considering different projection matrices as in  (\ref{obs_PAC_2})  is presented in \cite{Baron1} which  is summarized in Algorithm \ref{algo_SOMP}.

To perform  S-OMP with the  MMV model in (\ref{obs_PAC_2}) or (\ref{obs_PAC}),   a  high communication burden to make all the  information available at a central processing unit is required. 
In the following sections, we investigate several schemes to perform sparsity pattern recovery with MMV taking  the communication constraints into account.

\section{Sparsity pattern recovery with multiple access channels (MACs)}\label{MAC_recovery}
The use of a shared MAC for  sending information to the fusion center is attractive in many bandwidth constrained communication networks \cite{mergen1,li1}. In this section, we explore the applicability of the MAC transmission scheme for common sparsity pattern recovery.
With MAC, the received signal vector at the fusion center after  $M$ independent transmissions is given by,
\begin{eqnarray}
\mathbf z= \sum_{l=0}^{L-1} \mathbf y_l = \sum_{l=0}^{L-1} \mathbf B_l \mathbf s_l + \mathbf w\label{MAC_output_1}
\end{eqnarray}
where $\mathbf w  = \sum_{l=0}^{L-1}  \mathbf v_l$. It is further noted that we consider   noise free  observations at the fusion center to make the analysis simpler. With the MAC output (\ref{MAC_output_1}), the fusion center does not have  access to individual observation vectors acquired  at each node. Recovering the set of sparse vectors $\mathbf s_l$'s for $l-0,\cdots, L-1$ from (\ref{MAC_output_1}) becomes a data separation problem \cite{Kutyniok_arxiv1} which can be equivalently represented by the following underdetermined linear system:
\begin{eqnarray}
\mathbf z = [\mathbf B_0| \mathbf B_1| \cdots| \mathbf B_{L-1}] \left[
\begin{array}{cc}
\mathbf s_0\\
.\\
\mathbf s_{L-1}
\end{array}\right] + \mathbf w \label{MAC_separation}
\end{eqnarray}
where the projection matrix $[\mathbf B_0| \mathbf B_1| \cdots| \mathbf B_{L-1}]$ in (\ref{MAC_separation}) is $M\times NL$.
 We can cast the problem of sparse signal recovery based on (\ref{MAC_separation}) as a problem of block sparse signal recovery when all the signals $\mathbf s_l$'s for $l=0,\cdots, L-1$ share the same support as discussed below.
Let $\mathbf b_{li}$ be the $i$-th column vector of the matrix $\mathbf B_l$ for $i=0,1,\cdots, N-1$ and $l=0,1,\cdots,L-1$. Then $\mathbf B_l$ can be expressed by concatenating the column vectors  as  $\mathbf B_l =[\mathbf b_{l0}~\cdots~ \mathbf b_{l(N-1)}]$. Further, let $\mathbf s_l(i)$ denote the $i$-th element of the vector $\mathbf s_l$ for  $i=0,1,\cdots, N-1$ and $l=0,1,\cdots,L-1$. Then, we can express $\sum_{l=0}^{L-1}\mathbf B_l \mathbf s_l $ as
\begin{eqnarray}
\sum_{l=0}^{L-1}\mathbf B_l \mathbf s_l & =& (\mathbf b_{00}~\cdots ~ \mathbf b_{(L-1)0} |~ \mathbf b_{01}~\cdots ~ \mathbf b_{(L-1)1} |\nonumber\\
&~&\cdots | \mathbf b_{0(N-1)}~\cdots ~ \mathbf b_{(L-1)(N-1)})_{M\times LN}\nonumber\\
&~&[\mathbf s_{0}(0)~\cdots \mathbf s_{L-1}(0)|~\mathbf s_{0}(1)~\cdots ~ \mathbf s_{L-1}(1)|\nonumber\\
 &~&\cdots ~ |\mathbf s_{0}(N-1)~\cdots \mathbf s_{L-1}(N-1)]^T_{LN\times 1}\nonumber\\
&=&\mathbf D \mathbf c
\end{eqnarray}
where $\mathbf D =(\mathbf d_0 ~ | \mathbf d_1 ~ | \cdots |\mathbf d_{N-1})$ is a $M\times LN$ matrix where $\mathbf d_j =(\mathbf b_{0j}~\cdots ~ \mathbf b_{(L-1)j})$ for $j=0,1,\cdots, N-1$ and $\mathbf c = [\mathbf r_0~ | \mathbf r_1~|\cdots ~| \mathbf r_{N-1}]^T$ is a $LN\times 1$ vector where $\mathbf r_j = [\mathbf s_{0}(j)~\cdots \mathbf s_{L-1}(j)]$ for $j=0,1,\cdots, N-1$.
Since the sparse vectors $\mathbf s_l$'s for $l=0,1,\cdots, L-1$, share the common support,   $\mathbf c$ can be treated as a block sparse vector with $N$ blocks each of length of $L$ in which  only $k$ blocks are non zero.  Then the MAC output in (\ref{MAC_separation}) can be expressed as,
\begin{eqnarray}
\mathbf z = \mathbf D \mathbf c + \mathbf w\label{z_block}
\end{eqnarray}
where $\mathbf c$ is a block sparse vector  with $N$ blocks of size $L$ each. The capability of recovering $\mathbf c$ from (\ref{z_block}) depends on  the properties of the matrix $\mathbf D$.

In the case where the projection matrix $\mathbf A_l$ at the $l$-th node contains elements drawn from a Gaussian ensemble which are independent over $l=0,1,\cdots, L-1$,  the elements of $\mathbf D$ become  independent realizations of a  Gaussian ensemble.
When $\mathbf D$ contains Gaussian entries, it has been shown in \cite{Eldar3} that a block sparse vector of the form (\ref{z_block}) can be recovered reliably if the following condition
 is satisfied when the noise power  is negligible.
 \begin{definition}[\cite{Eldar3}]{Block RIP:} The matrix $\mathbf D$ satisfies block RIP  with parameter $0< \delta_{k} < 1$ if for every $\mathbf c\in \mathbb{R}^N$ that is block $k$-sparse  we have that,
 \begin{eqnarray}
 (1-\delta_{k})||\mathbf c||_2^2 \leq ||\mathbf D \mathbf c||_2^2 \leq  (1+\delta_{k})||\mathbf c||_2^2.
 \end{eqnarray}
 \end{definition}

 \begin{proposition}[\cite{Eldar3}]\label{prop_block}
 When the matrix $\mathbf D$ satisfies block RIP conditions, the block sparse signal  can be reliably recovered with high probability when
 \begin{eqnarray}
 M \geq \frac{36}{7\delta_0} \left(\ln \left(2{N\choose k}\right) + kL \ln \left(\frac{12}{\delta_0}\right) + t\right)\label{M_MAC}
 \end{eqnarray}
  for $0 < \delta_0 <1$ and $t > 0$.
 \end{proposition}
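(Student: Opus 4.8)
The plan is to follow the classical concentration-of-measure argument for the restricted isometry property of random Gaussian matrices, adapted to the union-of-subspaces geometry underlying block sparsity. The statement really contains two linked claims: that a Gaussian $\mathbf D$ satisfies block RIP with constant $\delta_0$ once $M$ exceeds the bound (\ref{M_MAC}), and that block RIP in turn permits reliable recovery of $\mathbf c$. The second link is exactly the recovery guarantee already attached to the block RIP definition, so it may be inherited verbatim from the cited result; all the genuine work lies in showing that the random $\mathbf D$ satisfies block RIP with parameter $\delta_0$ except on an event of probability at most $e^{-t}$. The measurement count (\ref{M_MAC}) will then emerge directly from requiring that this failure probability be controlled.

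First I would fix a single block-$k$-sparse vector $\mathbf c$ and record the Gaussian concentration inequality: with appropriately normalized i.i.d.\ entries, $\|\mathbf D\mathbf c\|_2^2$ concentrates about $\|\mathbf c\|_2^2$, giving
\[
\mathrm{Pr}\!\left(\left|\,\|\mathbf D\mathbf c\|_2^2-\|\mathbf c\|_2^2\,\right|\geq\delta\|\mathbf c\|_2^2\right)\leq 2\exp\!\left(-Mc_0(\delta)\right),
\]
where $c_0(\delta)$ is a fixed positive function of the deviation $\delta$. Next I would exploit the block structure: a block-$k$-sparse vector lives in one of $\binom{N}{k}$ subspaces, each obtained by choosing which $k$ of the $N$ blocks are active, and each of dimension $kL$. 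Within each subspace I would build an $\varepsilon$-net $Q$ of its unit sphere with $|Q|\leq(3/\varepsilon)^{kL}$, choosing $\varepsilon$ proportional to $\delta_0$ so that the covering term reads $kL\ln(12/\delta_0)$. Applying the concentration bound to every net point in every subspace and taking a union bound over the $\binom{N}{k}(3/\varepsilon)^{kL}$ points shows that the isometry inequality holds simultaneously on the whole net except on an event of probability $2\binom{N}{k}(3/\varepsilon)^{kL}\exp(-Mc_0(\delta))$. Requiring this to be at most $e^{-t}$ and taking logarithms produces precisely the three additive terms $\ln(2\binom{N}{k})$, $kL\ln(12/\delta_0)$, and $t$, scaled by the prefactor $1/c_0(\delta)$.

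The final and most delicate step is the passage from the net to the whole subspace. For an arbitrary unit vector in a given subspace I would approximate it by its nearest net point and control the discrepancy via the triangle inequality together with a bound on the operator norm of $\mathbf D$ restricted to that subspace; a standard supremum/fixed-point argument then upgrades the net-level deviation into a genuine block RIP constant $\delta_0$ valid over all block-$k$-sparse vectors, whence the cited recovery guarantee applies. The main obstacle is the bookkeeping in this last step: the net radius $\varepsilon$, the per-point deviation $\delta$, and the target constant $\delta_0$ are coupled, and it is the optimized choice among them, together with a concrete lower bound on $c_0$ over the admissible range of $\delta$, that converts the generic $1/c_0(\delta)$ prefactor into the explicit constant $36/(7\delta_0)$ appearing in (\ref{M_MAC}). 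Beyond verifying this probabilistic RIP bound, no further argument is needed, since reliable recovery follows immediately once block RIP with parameter $\delta_0$ is in force.
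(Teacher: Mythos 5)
The paper offers no proof of this proposition at all --- it is quoted directly from \cite{Eldar3} --- and your reconstruction is precisely the argument used in that source (adapted from Baraniuk et al.'s short proof of the RIP): pointwise Gaussian concentration, an $\varepsilon$-net of roughly $(12/\delta_0)^{kL}$ points on the unit sphere of each of the ${N\choose k}$ active-block subspaces of dimension $kL$, a union bound producing the three additive logarithmic terms, a net-to-sphere extension to get a genuine block-RIP constant, and then the recovery guarantee inherited from block RIP. So the proposal is correct and follows essentially the same route as the cited source; the only step you leave implicit --- and the only one worth double-checking, since a concentration exponent of order $M\delta_0^2$ generically yields a prefactor scaling as $1/\delta_0^2$ rather than $1/\delta_0$ --- is the constant bookkeeping that is claimed to produce $36/(7\delta_0)$ exactly.
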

 It is seen that, the number of compressive measurements  required per node, $M$,  for reliable sparse signal recovery based  on the MAC output is proportional to  $L$.
  Thus, as the network size  $L$  increases,  $M$ should be proportionally increased to recover the set of sparse signals.
    However, when the goal  is not to recover the complete sparse signals, but only the common  sparsity pattern, in the following we show that in the case where $\mathbf A_l = \mathbf A$ for $l=0,\cdots, L-1$, the sparsity pattern can be recovered using the MAC output without increasing $M$. Further, under certain  conditions as discussed below, the sparsity pattern recovery with  the MAC output can be performed with  performance that is  comparable  to the case when  all the observations are available at the fusion center (via PAC).

 When $\mathbf A_l= \mathbf A$ for $l=0,1,\cdots, L-1$, 
 (\ref{MAC_output_1})  reduces to
\begin{eqnarray}
\mathbf z
&=& \mathbf B \bar{\mathbf s} + \mathbf w\label{MAC_same}
\end{eqnarray}
where  $\bar{\mathbf s} = \sum_{l=0}^{L-1} \mathbf s_l$. Since all the signals share a common  sparsity pattern, $\bar{\mathbf s}$ is also a sparse vector with the same sparsity pattern. Thus, the problem of joint sparsity pattern recovery reduces to finding the sparsity pattern of $\bar{\mathbf s}$ based on (\ref{MAC_same}) which is the standard model as considered in CS.
Even though, when  individual vectors are sparse with  significant non zero coefficients, coefficients corresponding to non zero locations of $\bar{\mathbf s}$ can be negligible under certain cases. For example, assume that the elements of non zero coefficients of $\mathbf s_l$ are independent realizations of a  random variable with  mean zero. Then, from the central limit theorem, coefficients at non zero locations of $\bar{\mathbf s}$ may reach zero as $L$ becomes large  resulting in the vector $\bar{\mathbf s}$ with all zeros. However, when the amplitudes of all $\mathbf s_l(k)$'s (for $l=0,\cdots, L-1)$ for  a given $k$ have the same sign, $\bar{\mathbf s}$ becomes a sparse vector with significant non zero coefficients. For example, consider the case where multiple nodes observing the same Fourier sparse signal $\tilde{\mathbf x} = \boldsymbol\Phi\boldsymbol\theta$ with different attenuation factors so that we may express, $\mathbf s_l = \mathbf H_l \boldsymbol\theta$ for $l=0,1,\cdots, L-1$ where $\mathbf H_l$  is a diagonal matrix with positive diagonal elements that correspond to attenuation factors. Since, elements in $\mathbf H_l$ are assumed to be positive, each non zero element in $\mathbf s_l$ has   the same sign as the corresponding non zero element in $\boldsymbol\theta$. Thus, in such cases, $\bar{\mathbf s}$ becomes a sparse vector with significant nonzero coefficients. In the following, we examine the conditions under which  (\ref{MAC_same}) provides  performance that is comparable to  the case where all the observations are available at a fusion center (\ref{obs_PAC}).
\subsection{Necessary conditions for support recovery based on MAC output with any classification rule}
To compare the performance  with MAC and PAC outputs, we find a lower bound on the probability of error in support recovery irrespective of the recovery scheme used for support identification. 
Since we assume that  there are $k$ nonzero elements  in each signal $\mathbf s_l$ for $l=0,\cdots, L-1$, there are  $\Pi  ={N\choose k}$  possible support sets. Selecting the correct support set can be formulated as a multiple hypothesis testing problem with $\Pi$ hypotheses.
Based on Fano's inequality, the probability of error of a  multiple hypothesis testing problem with any decision rule is lower bounded by \cite{cover1}
\begin{eqnarray}
P_e \geq 1 - \frac{\Xi + \log 2}{\log \Pi}\label{pe_lower_bound}
\end{eqnarray}
where  $\Xi $ denotes  the average  \emph{Kullback Leibler} (KL) distance between any pair of densities.  Let $\mathcal D_{M}(p_m(\mathbf z) || p_n(\mathbf z))$ denote the KL distance between pdfs of the MAC output (\ref{MAC_same}) when the sparse supports are  $\mathcal U_m$ and $\mathcal U_n$ respectively, where $m,n = 0,\cdots, \Pi$. Thus,  $\Xi_{MAC} = \frac{1}{\Pi^2} \underset{m,n} {\sum} \mathcal D_{M}(p_m(\mathbf z) || p_n(\mathbf z))$.

When  the projection matrix  $\mathbf A$ is  given, we have $\mathbf z|\mathbf B  \bar{\mathbf s}\sim \mathcal N (\mathbf B \bar {\mathbf s}, \sigma_v^2 L \mathbf I_{M} )$ where $
\mathbf x\sim \mathcal N(\boldsymbol\mu, \boldsymbol\Sigma)$ denotes that the random vector $\mathbf x$ has a joint Gaussian distribution with mean $\boldsymbol\mu$ and the covariance matrix $\boldsymbol\Sigma$.
Then we have,
\begin{eqnarray*}
\mathcal D_{M}(p_m(\mathbf z) || p_n(\mathbf z))= \frac{1}{2\sigma_v^2 L} \left\| \sum_{l=0}^{L-1} (\mathbf B_{\mathcal U_n} {\mathbf s}_{l,\mathcal U_n} - \mathbf B_{\mathcal U_m} {\mathbf s}_{l,\mathcal U_m})\right\|_2^2
\end{eqnarray*}
for $m,n=0,1,\cdots, \Pi$ where $||.||_p$ denotes the $\l_p$ norm, $\mathbf B_{\mathcal U_n}$ is a $M\times k$ submatrix of $\mathbf B$ so that $\mathbf B_{\mathcal U_n}$ contains the columns of $\mathbf B$ corresponding to the column indices in the support set $\mathcal U_n$ for $n=0,1, \cdots, T-1$. We denote ${\mathbf s}_{l,\mathcal U_n}$ to be the $k\times 1$ non sparse vector corresponding to the support  $\mathcal U_n$.

 Similarly,  let  $\mathcal D_{P}(p_m(\mathbf Y) || p_n(\mathbf Y))$  be the KL distance between any two pdfs with the PAC output (\ref{obs_PAC}). Then,  we have $\Xi_{PAC} = \frac{1}{\Pi^2} \underset{m,n} {\sum}\mathcal D_{P}(p_m(\mathbf Y) || p_n(\mathbf Y))$.  With the PAC output, the observation  matrix in (\ref{obs_PAC}) has a matrix variate normal distribution conditioned on $  \mathbf B {\mathbf S}$ with mean $\mathbf B { \mathbf S}$ and covariance matrix $\sigma_v^2 \mathbf I_M \otimes \mathbf I_L$, denoted by,  $\mathbf Y | \mathbf B{ \mathbf S}  \sim \mathcal M \mathcal N_{M,L}(\mathbf B { \mathbf S}, \sigma_v^2 \mathbf I_M \otimes \mathbf I_L)$ where $\otimes$  denotes the Kronecker product. The corresponding KL distance between pdfs when the support sets are $\mathcal U_m$ and $\mathcal U_n$, respectively, with the PAC output is given by,
\begin{eqnarray*}
\mathcal D_{P}(p_m(\mathbf Y) || p_n(\mathbf Y))
 = \frac{1}{2\sigma_v^2 }  \sum_{l=0}^{L-1}  \left\|  (\mathbf B_{\mathcal U_n} {\mathbf s}_{l,\mathcal U_n} - \mathbf B_{\mathcal U_m} {\mathbf s}_{l,\mathcal U_m})\right\|_2^2
\end{eqnarray*}
for $m,n=0,1,\cdots, \Pi$.

\begin{lemma}\label{lemma1}
The  average  KL distance between any pair of densities with MAC and PAC outputs have  the following relationship
\begin{eqnarray}
\Xi_{MAC} \leq \Xi_{PAC}
 \end{eqnarray}
 with the equality only if $\mathbf s_l$'s are the same for all $l=0,\cdots, L-1$.
\end{lemma}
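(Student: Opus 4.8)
The plan is to reduce the inequality between the two averaged KL distances to a single pointwise inequality, holding separately for each pair of hypotheses $(m,n)$, which is then recognized as Cauchy--Schwarz (equivalently Jensen). First I would factor the common scalar $\frac{1}{2\sigma_v^2}$ out of both KL expressions and introduce the per-node discrepancy vectors
\[
\mathbf{g}_{l,mn} := \mathbf{B}_{\mathcal{U}_n}\mathbf{s}_{l,\mathcal{U}_n} - \mathbf{B}_{\mathcal{U}_m}\mathbf{s}_{l,\mathcal{U}_m}.
\]
The PAC distance then reads $\sum_{l}\|\mathbf{g}_{l,mn}\|_2^2$, while the MAC distance reads $\frac{1}{L}\|\sum_{l}\mathbf{g}_{l,mn}\|_2^2$; the two are aggregations of the same vectors, differing only in whether one sums the squared norms or squares the sum. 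Since $\Xi_{MAC}$ and $\Xi_{PAC}$ are averages of these quantities over the same index set with the identical weight $1/\Pi^2$, it suffices to establish the pointwise bound
\[
\frac{1}{L}\Bigl\|\sum_{l=0}^{L-1}\mathbf{g}_{l,mn}\Bigr\|_2^2 \;\le\; \sum_{l=0}^{L-1}\|\mathbf{g}_{l,mn}\|_2^2
\]
for every $(m,n)$ and then sum.

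This pointwise bound is exactly Jensen's inequality for the convex map $\mathbf{x}\mapsto\|\mathbf{x}\|_2^2$, or equivalently the triangle and Cauchy--Schwarz inequalities applied in succession: $\|\sum_l\mathbf{g}_l\|_2\le\sum_l\|\mathbf{g}_l\|_2\le\sqrt{L}\,(\sum_l\|\mathbf{g}_l\|_2^2)^{1/2}$, whence squaring gives the claim. Summing the resulting nonnegative per-pair gaps over all $m,n$ and reinstating the factor $1/(2\sigma_v^2\Pi^2)$ yields $\Xi_{PAC}-\Xi_{MAC}\ge0$, which is the asserted inequality.

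For the equality clause I would use strict convexity of $\|\cdot\|_2^2$: the per-pair gap vanishes precisely when the vectors $\{\mathbf{g}_{l,mn}\}_{l=0}^{L-1}$ all coincide, so $\Xi_{MAC}=\Xi_{PAC}$ forces this for every pair at once (the diagonal $m=n$ gives $\mathbf{g}_{l,mm}=\mathbf{0}$ and imposes nothing). The forward direction is immediate: if all $\mathbf{s}_l$ are equal then $\mathbf{g}_{l,mn}$ is independent of $l$ and every gap is zero. The converse is where I expect the real work: fixing two nodes $l,l'$ and subtracting, $l$-independence of the discrepancies yields $\mathbf{B}_{\mathcal{U}_n}(\mathbf{s}_{l,\mathcal{U}_n}-\mathbf{s}_{l',\mathcal{U}_n})=\mathbf{B}_{\mathcal{U}_m}(\mathbf{s}_{l,\mathcal{U}_m}-\mathbf{s}_{l',\mathcal{U}_m})$ for all $m,n$; comparing two supports differing in a single index and using that the columns of the (almost surely generic Gaussian) $\mathbf{B}$ are distinct forces each coordinate of the amplitude difference to vanish, so $\mathbf{s}_l=\mathbf{s}_{l'}$. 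The main obstacle is thus the converse equality direction — pinning down the meaning of $\mathbf{s}_{l,\mathcal{U}_m}$ under a hypothesized support and invoking genericity of $\mathbf{B}$ — so I would phrase the equality statement as holding for generic $\mathbf{B}$.
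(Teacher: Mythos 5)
Your proposal is correct and takes essentially the same route as the paper: the paper reduces Lemma~\ref{lemma1} to exactly your pointwise per-pair inequality (its Proposition~\ref{prop_0}) and proves it by the triangle inequality followed by a Cauchy--Schwarz-type expansion of the cross terms, which is just the unpacked form of your Jensen argument for $\mathbf x \mapsto \|\mathbf x\|_2^2$. If anything, your handling of the equality clause is more careful than the paper's, which in effect only establishes that equality forces the discrepancy vectors $\boldsymbol\beta_{m,n}(l)$ to coincide across $l$ and asserts, without your genericity-of-$\mathbf B$ step, that this implies all the $\mathbf s_l$'s are equal.
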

\begin{proof}
The proof follows from Proposition \ref{prop_0} given below.
\end{proof}
\begin{proposition}\label{prop_0}
For given $\mathcal U_m$ and $\mathcal U_n$, we  have,
\begin{eqnarray}
  &~& \frac{1}{L} \left\| \sum_{l=0}^{L-1} (\mathbf B_{\mathcal U_n} {\mathbf s}_{l,\mathcal U_n} - \mathbf B_{\mathcal U_m} {\mathbf s}_{l,\mathcal U_m})\right\|_2^2\nonumber\\
   &\leq& \sum_{l=0}^{L-1}  \left\|  (\mathbf B_{\mathcal U_n} {\mathbf s}_{l,\mathcal U_n} - \mathbf B_{\mathcal U_m} {\mathbf s}_{l,\mathcal U_m})\right\|_2^2\label{condition_KL}
   \end{eqnarray}
with equality only when all $\mathbf s_l$'s for $l=0,\cdots, L-1$ are equal.
\end{proposition}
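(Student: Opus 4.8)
The plan is to strip away the matrix/support-set structure and reduce (\ref{condition_KL}) to a standard fact about the squared norm of a sum of vectors. First I would introduce the shorthand $\mathbf{a}_l := \mathbf{B}_{\mathcal{U}_n}\mathbf{s}_{l,\mathcal{U}_n} - \mathbf{B}_{\mathcal{U}_m}\mathbf{s}_{l,\mathcal{U}_m} \in \mathbb{R}^M$ for $l = 0,\ldots,L-1$, so that the claim becomes the compact inequality $\frac{1}{L}\|\sum_{l=0}^{L-1}\mathbf{a}_l\|_2^2 \le \sum_{l=0}^{L-1}\|\mathbf{a}_l\|_2^2$. This isolates what is really going on: the particular matrices $\mathbf{B}_{\mathcal{U}_m}$, $\mathbf{B}_{\mathcal{U}_n}$ and the support sets play no further role, and the statement is a purely geometric one about the $L$ vectors $\mathbf{a}_l$.

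Next I would establish the core bound $\|\sum_{l}\mathbf{a}_l\|_2^2 \le L\sum_{l}\|\mathbf{a}_l\|_2^2$. The cleanest route is to expand the left-hand side as $\sum_{l,l'}\inner{\mathbf{a}_l}{\mathbf{a}_{l'}}$ and control each cross term by Cauchy--Schwarz followed by the arithmetic--geometric mean inequality, $\inner{\mathbf{a}_l}{\mathbf{a}_{l'}} \le \frac{1}{2}\left(\|\mathbf{a}_l\|_2^2 + \|\mathbf{a}_{l'}\|_2^2\right)$; summing over the $L^2$ ordered pairs collapses the right side to $L\sum_l\|\mathbf{a}_l\|_2^2$. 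An equivalent and perhaps more transparent argument invokes the convexity of $\mathbf{x}\mapsto\|\mathbf{x}\|_2^2$ and applies Jensen's inequality to the average $\frac{1}{L}\sum_l\mathbf{a}_l$, giving $\|\frac{1}{L}\sum_l\mathbf{a}_l\|_2^2 \le \frac{1}{L}\sum_l\|\mathbf{a}_l\|_2^2$; multiplying through by $L$ reproduces (\ref{condition_KL}). Since this is a one-line estimate, the inequality itself is not the hard part.

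Finally I would address the equality condition, which is where the only real subtlety lies. Because $\mathbf{x}\mapsto\|\mathbf{x}\|_2^2$ is strictly convex, Jensen's inequality is tight precisely when all the $\mathbf{a}_l$ coincide; equivalently, the AM--GM step is tight only when $\|\mathbf{a}_l\|_2 = \|\mathbf{a}_{l'}\|_2$ and the Cauchy--Schwarz step forces alignment, again yielding $\mathbf{a}_0 = \cdots = \mathbf{a}_{L-1}$. If all $\mathbf{s}_l$ are equal then all $\mathbf{a}_l$ are equal and equality holds, which is exactly the direction needed to conclude $\Xi_{MAC}\le\Xi_{PAC}$ in Lemma \ref{lemma1}.

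I would flag the converse as the genuinely delicate point: equality in (\ref{condition_KL}) only forces the difference vectors $\mathbf{a}_l$ to agree, and passing from ``all $\mathbf{a}_l$ equal'' back to ``all $\mathbf{s}_l$ equal'' requires the map $\mathbf{s}_l\mapsto\mathbf{a}_l$ to be injective, i.e.\ a full-column-rank property of the concatenated submatrix $[\mathbf{B}_{\mathcal{U}_n}\ \ {-}\mathbf{B}_{\mathcal{U}_m}]$. This holds generically for Gaussian $\mathbf{B}$, so I would either assume the relevant columns are linearly independent or, to be safe, restate the equality clause in the always-valid weaker form ``equality holds if and only if all $\mathbf{a}_l$ are equal,'' which already suffices for the lemma.
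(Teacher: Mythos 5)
Your proof is correct, and for the inequality itself it is essentially the paper's argument: the paper likewise abbreviates $\boldsymbol\beta_{m,n}(l) = \mathbf B_{\mathcal U_n}\mathbf s_{l,\mathcal U_n} - \mathbf B_{\mathcal U_m}\mathbf s_{l,\mathcal U_m}$ and reduces (\ref{condition_KL}) to $\bigl\|\sum_l \boldsymbol\beta_{m,n}(l)\bigr\|_2^2 \le L\sum_l\|\boldsymbol\beta_{m,n}(l)\|_2^2$. The bookkeeping differs slightly: the paper first bounds $\bigl\|\sum_l \boldsymbol\beta_{m,n}(l)\bigr\|_2$ by $\sum_l\|\boldsymbol\beta_{m,n}(l)\|_2$ (a triangle-inequality step, which it labels Cauchy--Schwarz) and then closes with the identity $(L-1)\sum_l\|\boldsymbol\beta_{m,n}(l)\|_2^2 - \sum_{k\ne l}\|\boldsymbol\beta_{m,n}(l)\|_2\,\|\boldsymbol\beta_{m,n}(k)\|_2 = \sum_{l<k}\bigl(\|\boldsymbol\beta_{m,n}(l)\|_2-\|\boldsymbol\beta_{m,n}(k)\|_2\bigr)^2 \ge 0$, whereas you expand the square into inner products and apply Cauchy--Schwarz plus AM--GM termwise (equivalently, Jensen with the convexity of $\|\cdot\|_2^2$). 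Both are one-line estimates; yours is arguably cleaner, and as a side note the paper's displayed intermediate bound carries a sign slip --- the cross terms must enter with a minus sign for its final sum-of-squares identity to hold.

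Where you genuinely improve on the paper is the equality clause. The paper verifies only the easy direction (all $\mathbf s_l$ equal $\Rightarrow$ equality) and asserts the converse; its chain of inequalities, read charitably, forces at most that the difference vectors $\boldsymbol\beta_{m,n}(l)$ coincide (alignment from the triangle-inequality step, equal norms from the sum-of-squares step), never that the $\mathbf s_l$ themselves do. You are right that bridging ``all $\mathbf a_l$ equal'' to ``all $\mathbf s_l$ equal'' requires injectivity of $\mathbf s_l \mapsto \mathbf B_{\mathcal U_n}\mathbf s_{l,\mathcal U_n} - \mathbf B_{\mathcal U_m}\mathbf s_{l,\mathcal U_m}$, a rank condition on the concatenated submatrix that holds generically for Gaussian ensembles but is nowhere invoked in the paper (and is delicate when $\mathcal U_m\cap\mathcal U_n\neq\emptyset$, since shared columns recur with opposite signs). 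Your fallback --- stating equality if and only if all $\mathbf a_l$ coincide --- is the honest version of the claim, and it is all that Lemma \ref{lemma1} actually needs, since the lemma's conclusion $\Xi_{MAC}\le\Xi_{PAC}$ uses only the inequality together with the ``if'' direction you proved.
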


\begin{proof}
When  $\mathbf s_l = {\mathbf s}$ for $l=0,\cdots, L-1$, it is obvious that the right and left hand sides of (\ref{condition_KL}) are equal. To prove the result when $\mathbf s_l$'s for $l=0,1,\cdots, L-1$ are different,  let $\boldsymbol\beta_{m,n}(l) = \mathbf B_{\mathcal U_n} {\mathbf s}_{l,\mathcal U_n} - \mathbf B_{\mathcal U_m} {\mathbf s}_{l,\mathcal U_m}$ be a length-$M$ vector for given $\mathcal U_m$ and $\mathcal U_n$ and for $l=0,\cdots, L-1$. Then we can write,
\begin{eqnarray}
  &~& \Delta = \frac{1}{L} \left\| \sum_{l=0}^{L-1} (\mathbf B_{\mathcal U_n} {\mathbf s}_{l,\mathcal U_n} - \mathbf B_{\mathcal U_m} {\mathbf s}_{l,\mathcal U_m})\right\|_2^2\nonumber\\
   &-& \sum_{l=0}^{L-1}  \left\|  (\mathbf B_{\mathcal U_n} {\mathbf s}_{l,\mathcal U_n} - \mathbf B_{\mathcal U_m} {\mathbf s}_{l,\mathcal U_m})\right\|_2^2\nonumber\\
   &=& \frac{1}{L}\left(\left\|\sum_{l=0}^{L-1}  \boldsymbol\beta_{m,n}(l) \right\|_2^2 - L \sum_{l=0}^{L-1} \left\| \boldsymbol\beta_{m,n}(l)\right\|_2^2\right). \label{Delta_1}
   \end{eqnarray}
   From Cauchy Schwarz inequality, we have,
   \begin{eqnarray}
   \left\|\sum_{l=0}^{L-1}  \boldsymbol\beta_{m,n}(l) \right\|_2^2&\leq& \left(\sum_{l=0}^{L-1}  \left\| \boldsymbol\beta_{m,n}(l) \right\|_2\right)^2\nonumber\\
   &=& \sum_{l=0}^{L-1} \left\| \boldsymbol\beta_{m,n}(l) \right\|_2^2 \nonumber\\
   &+& \underset{k\neq l} {\sum}\| \boldsymbol\beta_{m,n}(l) \|_2 \| \boldsymbol\beta_{m,n}(k) \|_2.
   \end{eqnarray}
   Thus,  we can write $\Delta$ in (\ref{Delta_1}) as,
   \begin{eqnarray*}
   \Delta &\leq& \frac{-1}{L}\left((L-1) \sum_{l=0}^{L-1} \left\| \boldsymbol\beta_{m,n}(l) \right\|_2^2  \right.\nonumber\\
   &~&\left.+  \underset{k\neq l} {\sum}\| \boldsymbol\beta_{m,n}(l) \|_2 \| \boldsymbol\beta_{m,n}(k) \|_2\right)\nonumber\\
   &=& \frac{-1}{L}\left(\underset{l\neq k, l<k}{\sum}\left( \| \boldsymbol\beta_{m,n}(l) \|_2 - \| \boldsymbol\beta_{m,n}(k) \|_2\right)^2\right)\leq 0
   \end{eqnarray*}
   which completes the proof.
\end{proof}

 From Lemma \ref{lemma1}, it can be seen that, when each node samples via the same projection matrix and the sparse signals $\mathbf s_l$'s are not significantly different from  each other, we may approximate $\Xi_{MAC} \approx \Xi_{PAC}$. Then,  the lower bound on the probability of error (\ref{pe_lower_bound}) in sparsity pattern recovery with the  MAC output becomes  very close to that with access to all observation vectors. From numerical results, we  observe   a comparable performance  as long as  $\mathbf s_l(k)$'s for all $l=0,\cdots, L-1$ for given $k$ have the same sign  even though they are significantly different from each other in amplitude.

\subsubsection{Minimum number of measurements with Gaussian matrices}
When the entries of the measurement matrix $\mathbf A$ are drawn from a Gaussian ensemble with mean zero and variance $1$, the author in \cite{wain2} derived the conditions under which the  probability of error in (\ref{pe_lower_bound}) is bounded away from zero with any recovery technique with a single measurement vector. The main difference in  the sparsity pattern recovery with  the MAC output (\ref{MAC_same}), and  that with the SMV appears in terms of the SNR. Based on the  results in \cite{wain2},  (\ref{MAC_same}) is  capable of recovering the sparsity pattern with any recovery technique if,
\begin{eqnarray}
M > \max \left\{ \frac{\log{N\choose k}}{8k \frac{ \bar{ s}_{\min}^2}{L\sigma_v^2}}, \frac{\log(N-k)}{4 \frac{ \bar{ s}_{\min}^2}{L\sigma_v^2}}\right\}\label{M_lowerBound1}
\end{eqnarray}
where $\bar{ s}_{\min} = \underset{j=0,1,\cdots,N-1}{\min}\bar{\mathbf s}(j)$. With the assumption that $\mathbf s_l(k)$'s for all $l=0,\cdots, L-1$ for a given $k$ have the same sign, we will get $\bar{ s}_{\min}=L~\underset{l, j}{\min}\{\mathbf s_l(j)\}$ where the minimization is over $l=0,\cdots, L-1$ and $j=0,\cdots, N-1$. Define the minimum component SNR to be $\gamma_{c,\min} = \frac{\left(\underset{l, j}{\min}\{\mathbf s_l(j)\}\right)^2}{\sigma_v^2}$. Then, the lower bound on $M$ in (\ref{M_lowerBound1}) can be written as,
\begin{eqnarray}
M > \max \left\{ \frac{\log{N\choose k}}{8k L \gamma_{c,\min}}, \frac{\log(N-k)}{4 L \gamma_{c,\min}}\right\}\label{M_lowerBound2}.
\end{eqnarray}
Thus,  (\ref{M_lowerBound2}) gives the necessary conditions for the sparsity pattern recovery based on the MAC output (\ref{MAC_same}).

\subsubsection{Sparsity pattern recovery based on (\ref{MAC_same}) via OMP}
Based on the MAC output in  (\ref{MAC_same}), the standard OMP as in Algorithm \ref{algo0} can be used to  estimate the sparse support by replacing $\mathbf y $ by $\mathbf z$.


Although the MAC transmission  scheme  saves communication bandwidth compared to PAC,  and provides comparable performance in terms of  common sparsity pattern recovery under certain conditions,  its use is still restrictive due to several reasons. (i). It still requires the knowledge of the measurement matrices at the fusion center which involves a certain communication overhead. (ii). Since the fusion center does not have access to individual observation vectors but only to their  linear combination, the capability of recovering the common  sparsity pattern is limited by the nature of the sparse signals. 
(iv). Extension  to estimate the amplitudes of individual observations is not straight forward since individual measurements are not accessible.

In  certain  communication networks, it is  desirable for each node to have an  estimate of the sparsity pattern of the signal with less communication overhead in contrast to centralized solutions.
Thus, in the following, we consider decentralized algorithms for sparse support recovery where there is no central fusion center to make the final estimation.

\section{Decentralized Sparsity Pattern Recovery via OMP}\label{decentralized_recovery}
In a naive approach to  implement S-OMP  in a distributed manner, each node needs to have the knowledge of the measurement vectors and the projection matrices at every other node, which requires high communication burden and usage of a large memory at distributed nodes. A  simple  approach to minimize the communication overhead compared to employing  S-OMP at each node is to estimate  the common  sparsity pattern independently  at each node based on only its own measurement vector  using standard OMP and exchange the estimated sparsity pattern among the nodes to get a fused estimate.  Although this scheme (we call this scheme as D-OMP in subsequent analysis)  requires each node to transmit only its estimated support set for fusion, it lacks accuracy especially when the number of measurements collected at each node is small. The two  proposed  decentralized algorithms in this paper can be considered as an intermediate approach between these two extreme cases.

When the standard OMP algorithm is  performed at a given  node to estimate the  support of sparse signals with only $k$ non zero elements,  at each iteration, there are $N-k$ possible incorrect indices that can be selected. The probability of selecting  an incorrect index at each iteration increases as the number of measurements at each node, $M$, decreases. In the following, we modify  the standard OMP to exploit collaboration and fusion among distributed nodes in a decentralized manner to reduce the probability of selecting an incorrect index at each iteration compared to that with the standard OMP. 

We introduce  the following additional  notations. Let $\mathcal I = (\mathcal G, \Upsilon)$ be an undirected connected graph with node set $\mathcal G = \{0,1,\cdots, L-1\}$ and edge set $\Upsilon$, where each edge $(i,j)\in \Upsilon$ is an undirected pair of distinct nodes within the one-hop communication range.   Define $\mathcal G_l = \{k| (l,k) \in \Upsilon\}$ to be the set containing the indices of neighboring nodes of the $l$-th  node.
As defined in Section \ref{motivation}, let $\mathcal U$ be the support set which contains the indices of non zero coefficients of the sparse signals and  $\hat{\mathcal U_l}$ be the estimated support set at the $l$-th node for $l=0,1,\cdots, L-1$. $\mathbf B(\mathcal A)$ denotes the submatrix of $\mathbf B$ which has columns of $\mathbf B $ corresponding  to the elements in the set $\mathcal A$ for  $\mathcal A \subset \{0,1,\cdots,N-1\}$. We use $|x|$ to denote the absolute value of a scalar $x$ (it is noted that we use the same notation to denote the cardinality of a set and it should be clear by the context).

\subsection{Algorithm development and strategies: DC-OMP 1}\label{dc-omp}
In the proposed distributed and collaborative OMP algorithm 1 (DC-OMP 1)  stated in Algorithm \ref{algo1},   the goal is to improve the performance by inserting an index fusion stage to the standard OMP algorithm at each iteration. More specifically,  the following two phases are performed during each iteration $t$.

\subsubsection {\bf Phase 1 }
In Phase I, each node estimates an index  based on only its own observations similar to index selection stage in standard OMP (step 2 in Algorithm \ref{algo0}).

\subsubsection{\bf Phase II}
Once  an index $\lambda_l(t)$ is estimated in Phase I, it is exchanged among the neighborhood $\mathcal G_l$.  Subsequently, the $l$-th node will have estimated indices  at all the  nodes in its neighborhood $\mathcal G_l$.     Via fusion, each node selects a set of indices (from $|\mathcal G_l\cup l|$ number of indices)  such that most of the nodes agree upon on the set of indices as detailed next.

\begin{algorithm}
Inputs: $\mathbf y_l$, $\mathbf B_l$,  $\mathcal G_l$,  $k$
\begin{enumerate}
\item Initialize $t=1$, $\hat{\mathcal U}_l(0) = \emptyset $, residual vector $\mathbf r_{l,0} = \mathbf y_l$\\
    \textbf{Phase I}
\item Find the index $\lambda_l(t)$ such that
\begin{eqnarray}
\lambda_l(t) = \underset{\omega = 0,\cdots,N-1}{\arg~ \max} ~ {|\langle\mathbf r_{l,t-1}, \mathbf b_{l,\omega}\rangle|}
\end{eqnarray}
\textbf{Phase II}
\item \emph{Local communication}
\begin{enumerate}
\item Transmit $\lambda_l(t)$ to $\mathcal G_l$
 \item Receive $\lambda_i(t), i\in \mathcal G_l$
\end{enumerate}
 \item Update the index set $\lambda_l^*(t)$, as discussed in subsection \ref{step_3_0}
 \item  Set $\hat{\mathcal U}_l(t) = \hat{\mathcal U}_l(t-1) \cup \{\lambda_l^* (t)\}$ and $l_t = |\hat{\mathcal U}_l(t)|$
\item  Compute the projection operator $\mathbf P_l(t) = \mathbf B_l(\hat{\mathcal U}_l(t)) \left(\mathbf B_l(\hat{\mathcal U}_l(t)) ^T \mathbf B_l(\hat{\mathcal U}_l(t)) \right)^{-1} \mathbf B_l(\hat{\mathcal U}_l(t)) ^T$. Update the residual vector:  $\mathbf r_{l,t} = (\mathbf I - \mathbf P_l(t))\mathbf y_l$
     \item  Increment $t=t+1$ and go to step 2 if $l_t\leq k$, otherwise, stop and set $\hat{\mathcal U}_l = \hat{\mathcal U}_l(t-1)$

 \end{enumerate}
 \caption{DC-OMP 1 for joint sparsity pattern recovery:  Executed at $l$-th node, $\hat{\mathcal U}_l$ contains the estimated indices of the support}\label{algo1}
 \end{algorithm}

\subsubsection{Performing step 4 in Algorithm \ref{algo1}}\label{step_3_0}
For small networks (e.g. cognitive radio networks with few nodes), it is reasonable to assume that $\mathcal G_l\cup l = \mathcal G$. When  $\mathcal G_l\cup l = \mathcal G$, each node has the estimated indices at step 2 in Algorithm \ref{algo1}. Let $\alpha(t) =\{\lambda_l(t)\}_{l=0}^{L-1}$.
  If there are any two indices such  that $\lambda_l(t) = \lambda_m(t)$ for $ l\neq m$, then it is more likely that the corresponding index belongs to the true support.  This is because, the probability that two nodes estimate the same index which does not belong to the true support by performing step 2 is negligible especially when $N-k$ becomes large. The  updated set of indices $\lambda^*_l(t)$   at the $t$-th iteration is computed as below;
\begin{itemize}
\item If there are indices in $\alpha(t)$ with more than one occurrences, such indices are put in the set $\lambda_l^*(t)$ (such that $\lambda_l^*(t) = \{ set~ of ~indices~in~\alpha(t)~ which ~occur ~more~ than~ once \} $.
    \item If  there is no index obtained from step 2  that agrees with one or more nodes so  that all $L$ indices in $\alpha(t) $ are distinct, then select one index from $\alpha(t)$ randomly and put in  $\lambda_l^*(t)$. In this case, to avoid the same index being selected at subsequent iterations, we force  all nodes to use the same index.
\end{itemize}

 Next, consider the case where  $\mathcal G_l\cup l \subset {\mathcal G}$. Then, at the $l$-th node, we have  $\alpha_l(t)=\{\lambda_l(t),\{\lambda_i(t)\}_{i\in \mathcal G_l}\}$. If there are indices in $\alpha_l(t)$ which have more than one  occurrences, those indices are  put in $\alpha_l^*(t)$. Otherwise, if all indices in $\alpha_l(t)$ are distinct, we set $\alpha_l^*(t) = \lambda_l(t)$.  When $\mathcal G_l\cup l \subset {\mathcal G}$, since the $l$-th node does not receive the estimated indices  from the whole network at a given iteration, different nodes may agree upon different sets of  indices at a given iteration.
 When two neighboring nodes agree upon two different sets of indices during  the $t$-th iteration, there is a possibility that one node selects the same index at a later iteration beyond  $t$.  To avoid the  $l$-th node selecting  the same index twice, we  perform an additional step in updating $\lambda_l^*(t)$ compared to the case where $\mathcal G_l\cup l = \mathcal G$; i.e.,  check whether  at least one index in $\alpha_l^*(t)$ belongs to $\hat{\mathcal U}_l(t-1)$. More specifically, if $\alpha_l^*(t) \cup \hat{\mathcal U}_l(t-1) =\hat{\mathcal U}_l(t-1)$, then set $\lambda_l^*(t) = \lambda_l(t)$, otherwise update $\lambda_l^*(t)=\alpha_l^*(t)$. It is noted that, DC-OMP 1 with $\mathcal G_l\cup l \subset {\mathcal G}$ in the worst case (i.e. all the indices in $\alpha_l(t)$ are distinct for all $t$) coincides with the standard OMP.
Moreover, when $|\mathcal G_l \cup l| > k$, it is more likely that the vector $\alpha_l(t)$ has at  least one set of two (or more) indices with the same value, thus, $\lambda_l^*(t)$ is updated appropriately  most of the  time at each iteration.

\subsection{Algorithm development and strategies: DC-OMP 2}
 The proposed distributed and collaborative OMP algorithm 2 (DC-OMP 2) is presented in Algorithm \ref{algo_DC-OMP1}.
Compared to DC-OMP 1, DC-OMP 2 has a measurement fusion stage in Phase I in addition to the index fusion stage in Phase II.

In the case where all the observation vectors  and projection matrices are available at each node, all the nodes in the network  can perform S-OMP as presented  in Algorithm \ref{algo_SOMP}. To perform S-OMP at each node,  the quantity $f_{\omega} = \sum_{l=0}^{L-1}{{|\langle\mathbf r_{l,t-1}, \mathbf b_{l,\omega}\rangle| }}$  for each $\omega=0,\cdots,N-1$ needs to be computed at each iteration where $\mathbf r_{l,t-1}$ and  $\mathbf b_{l,\omega}$ are as defined in Algorithm \ref{algo_SOMP}. In  the first phase of DC-OMP 2, an approximation to this quantity is computed at a given node via only one-hop  \emph{local communication}.

\begin{algorithm}
Inputs: $\mathbf y_l$, $\mathbf B_l$,  $\mathcal G_l$, $\mathcal G$, $k$, $L$
\begin{enumerate}

\item Initialize $t=1$, $\hat{\mathcal U}_l(0) = \emptyset $, residual vector $\mathbf r_{l,0} = \mathbf y_l$\\
    \textbf{Phase I}
   \item  \emph {Local Communication:}
    \begin{enumerate}
    \item Compute $f_{l,\omega}(t) = {|\langle\mathbf r_{l,t-1}, \mathbf b_{l,\omega}\rangle| }$ for $\omega=0,1,\cdots, N-1$ and transmit to neighborhood $\mathcal G_l$
        \item Receive $f_{j,\omega}(t)$ from $\mathcal G_l $ to compute $g_{l,\omega}(t) =f_{l,\omega}(t) + \underset{j\in \mathcal G_l}{\sum} f_{j,\omega}(t)$ for $\omega=0,1,\cdots, N-1$ \\
            \end{enumerate}

\item Find
$
\lambda_l(t) = \underset{\omega = 0,\cdots,N-1}{arg~ max} ~ g_{l,\omega}(t)\\
$\\
\textbf{Phase II}
\item  \emph{ Global  Communication:}
 \begin{enumerate}
 \item Transmit $\lambda_l(t)$ to $\mathcal G$
 \item Receive $\lambda_i(t), i\in \mathcal G\setminus l$

 \end{enumerate}
 \item Find the set $\lambda^*_l (t) $ as in Subsection \ref{step_3_1}
 \item  Update the set of estimated indices:  $\hat{\mathcal U}_l(t) = \hat{\mathcal U}_l(t-1) \cup \{\lambda^*_l(t)\}$, set $l_t = |\mathcal U_l(t)|$
\item Compute the orthogonal projection operator:
 $\mathbf P_l(t) =  \mathbf B_l(\hat{\mathcal U}_l(t)) \left(\mathbf B_l(\hat{\mathcal U}_l(t)) ^T \mathbf B_l(\hat{\mathcal U}_l(t)) \right)^{-1} \mathbf B_l(\hat{\mathcal U}_l(t)) ^T$
\item Update the residual:  $\mathbf r_{l,t} = (\mathbf I - \mathbf P_l(t))\mathbf y_l$
     \item  Increment $t=t+1$ and go to step 2 if $l_t\leq k$, otherwise, stop and set $\hat{\mathcal U}_l = \hat{\mathcal U}_l(t-1)$

 \end{enumerate}
 \caption{DC-OMP 2 for joint sparsity pattern recovery; Executed at $l$-th node, $\hat{\mathcal U}_l$ contains the estimated indices of the support}\label{algo_DC-OMP1}
 \end{algorithm}

\subsubsection {\bf Phase 1 } At the $t$-th iteration, $l$-th node computes
    \begin{eqnarray}
    f_{l,\omega}(t) = {|\langle\mathbf r_{l,t-1}, \mathbf b_{l,\omega}\rangle| }
     \end{eqnarray}
     for $\omega=0,1,\cdots, N-1$ and exchanges  it with  the one-hop neighborhood $\mathcal G_l$. Similarly, every node receives such information from its one-hop neighbors  so that the $l$-th node computes the quantity
     \begin{eqnarray}
     g_{l,\omega}(t) =f_{l,\omega}(t)  + \underset{j\in \mathcal G_l}{\sum} f_{j,\omega}(t)
      \end{eqnarray} for $\omega=0,1,\cdots, N-1$ and $l=0,1,\cdots, L-1$. Then an estimate for the index in the support set,  $\lambda_l(t)$,  at the $l$-th node is computed as $\lambda_l(t) = \underset{\omega = 0,\cdots,N-1}{\arg~ \max} ~ g_{l,\omega}(t)$ as given in Step 3 of Algorithm \ref{algo_DC-OMP1}. 
This  '\emph{initial estimate}', $\lambda_l(t)$ at the $t$-th iteration will be used to get an  updated estimate  in the next phase.

 \subsubsection {\bf Phase II} In Phase II, as in DC-OMP 1, an updated index (or a set of indices) for the true support with higher accuracy compared to  the one that is computed  in Phase I, is obtained by performing  collaboration and fusion  via global communication.  More specifically,  each node transmits  its estimated index to the whole network so that every node in the network receives all the estimated indices denoted by, as before,    $\alpha(t) =\{\lambda_l(t)\}_{l=0}^{L-1}$  from  step 3 of Algorithm \ref{algo_DC-OMP1}.  

\subsubsection{Index fusion in Phase II }\label{step_3_1}
  The index fusion stage is performed as stated in Subsection \ref{step_3_0} for $\mathcal G_l\cup l = \mathcal G$.
It is noted that when $L > k$, it is more likely that the vector $\alpha(t)$ has at  least one set of two (or more) indices with the same value, thus, $\lambda_l^*(t)$ is updated appropriately  most of the  time at each iteration. Since each node has the indices received from all the other nodes in the network, every node has the same estimate for $\mathcal U_l$ when the  algorithm terminates. Further, since more than one index can be selected for the set $\lambda^*_l(t)$ at a given iteration, the algorithm can be terminated in  less than $k$ iterations. In the index fusion stage, the reason for imposing global communication is that after  performing index fusion via global communication, all nodes in the network have the same estimated index set. Thus, the residual computed at step 8 at a given node  corresponds to the same remaining column indices of the dictionary at every node.

\subsection{Performance analysis}
\subsubsection{DC-OMP 1}\label{performance_DCOMP1}
The main difference between DC-OMP 1 and the standard OMP is the additional index fusion stage in Phase II in Algorithm \ref{algo1}. In Phase II in Algorithm \ref{algo1}, two things can happen at a given iteration $t$. (i).  $\lambda_l^*(t) = \lambda_l(t)$ if there are no  indices that occur more than once in $\alpha_l(t)$ or the agreed upon indices are already in $\hat{\mathcal U}_l(t-1)$.  Thus, in this case $Pr(\lambda_l^{\star}(t) \in \mathcal U) =  Pr(\lambda_l(t) \in \mathcal U)$.  In the worst case where $\lambda_l^*(t) = \lambda_l(t)$ for all $t$,  DC-OMP 1 in Algorithm \ref{algo1} coincides with the standard OMP, and the convergence properties are  the same as the standard OMP.  (ii). There can be two or more indices with the same value in $\alpha_l(t)$ and in general there can be multiple such occurrences.  In particular, when the neighborhood size at each node exceeds the sparsity index $k$, $\alpha_l(t)$ contains at least two indices with the  same value as discussed below.

Let $M_1(\delta)$ be the number of measurements required by the standard OMP to estimate the sparsity pattern with probability exceeding $1 -\delta$ for $0 < \delta < 1$ with a given projection matrix. In particular, when the projection matrices  contain realizations of  iid Gaussian random variables,  $M_1(\delta) = ck \log(N/\delta)$ for $\delta\in (0,0.36)$ and constant $c$ \cite{tropp1}. It is noted that we do not restrict our analysis to only  the Gaussian case; it is applicable more generally. Also, in the following analysis, we assume $\mathcal G_l \cup l = \mathcal G $ in DC-OMP 1.

\begin{lemma}
Assume $L > k$. Let  $M_1(\delta)$ be the number of measurements required by the standard OMP to recover the sparsity pattern correctly with probability exceeding $1-\delta$. Then, when $M \geq M_1(\delta)$, DC-OMP 1 at a given node recovers the sparsity pattern with  probability exceeding $1-\frac{1}{(N-2k)}\delta^2$.
\end{lemma}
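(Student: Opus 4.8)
The plan is to compare DC-OMP 1 against a single node's standard OMP run and to show that the index-fusion step in Phase II can only add a wrong index when the independent per-node errors \emph{conspire}. First I would fix the simplifying regime of the lemma, $\mathcal{G}_l\cup l=\mathcal{G}$, so that every node sees the same list $\alpha(t)=\{\lambda_l(t)\}_{l=0}^{L-1}$ and hence executes an identical fusion decision; it then suffices to analyze this one common trajectory. The inputs are the per-node guarantee underlying $M_1(\delta)$ (with $M\geq M_1(\delta)$ each node recovers the whole support with probability at least $1-\delta$, hence errs with probability at most $\delta$), together with the fact that the noise vectors $\mathbf v_l$, and therefore the Phase-I selections $\lambda_l(t)$, are independent across $l$.

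The key structural observation, and the place where the hypothesis $L>k$ enters, is a pigeonhole argument at a generic iteration $t$ conditioned on all earlier selections being correct. At most $k$ correct indices remain, so if every one of the $L>k$ nodes selects a correct index in Phase I, two of them must coincide; the fusion rule detailed in Subsection~\ref{step_3_0} is then guaranteed to enter the ``repeated-index'' branch and to append only correct indices. Consequently the algorithm can append a wrong index at iteration $t$ only if at least one node errs, and in fact only through one of two events: (A) two or more nodes select the \emph{same} incorrect index, so that a wrong index is the repeated one that gets fused, or (B) all $L$ selections are distinct and the forced random choice is wrong. I would argue that (B) is higher order: with $L>k$, the all-distinct configuration already forces at least $L-k$ simultaneous errors and additionally requires the correctly selecting nodes to avoid colliding on the few remaining correct indices, so its probability is dominated by that of (A).

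For the dominant event (A), I would bound the probability that two fixed nodes both err and land on the same index. By independence this factors into $\delta^2$ for the two errors; and, using that an erring node's incorrect selection is spread (essentially uniformly) over the pool of available wrong indices, whose size is at least $N-k-(k-1)\ge N-2k$ after excluding the $k$-element true support and the already (correctly) selected indices, the two wrong picks coincide with probability at most $1/(N-2k)$. This yields the per-coincidence bound $\delta^2/(N-2k)$, and hence the claimed success probability $1-\tfrac{1}{N-2k}\delta^2$.

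The main obstacle is turning the two heuristic ingredients into rigorous inequalities. First, justifying the near-uniform spread of an erring node's selection over the $\ge N-2k$ incorrect columns is delicate, since in general it depends on the geometry of $\mathbf B_l$ and on the signal amplitudes. Second, the bookkeeping must be controlled so that the final bound is not inflated: both the $\binom{L}{2}$ choices of the coinciding pair and the union over the (at most $k$) iterations have to be shown to be absorbed, and branch (B) must be verified to be negligible relative to $\delta^2/(N-2k)$. Establishing that these combinatorial factors do not degrade the stated bound is the crux of the argument.
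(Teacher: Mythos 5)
Your proposal follows essentially the same route as the paper's proof: the per-node standard-OMP guarantee giving per-iteration error probability at most $\delta$, the pigeonhole argument from $L>k$ forcing a repeated (and hence fused) index, independence of the noise across nodes to square $\delta$, and a near-uniform spread of an erring node's selection over the at least $N-2k$ available incorrect indices to produce the $1/(N-2k)$ factor. The ``obstacles'' you flag are exactly the points the paper itself treats loosely rather than resolves: it imposes near-uniformity by an explicit approximation ($\min \approx (1-\nu_l)\max$ with $0\leq\nu_l\ll 1$), silently ignores the all-distinct/random-selection branch (your event (B)) by asserting the multiplicity is at least two, and absorbs the union-bound bookkeeping only through the $k\ll N$ approximation $\delta^2(N-k)/(N-2k)^2 \approx \delta^2/(N-2k)$, with no accounting over node pairs or iterations.
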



\begin{proof}
When the index fusion stage  is ignored, DC-OMP 1 coincides with the standard OMP. Thus, when  $M \geq M_1(\delta)$  we have that
\begin{eqnarray}
Pr(\lambda_l(t) \in \mathcal U) \geq 1- \delta. \label{p_lamda}
\end{eqnarray}
for $0< \delta < 1$ where $\lambda_l(t)$ is estimated in step 2 in DC-OMP 1.
Let $u_0, u_1, \cdots, u_{k-1}$ denote the $k$ indices in the true support $\mathcal U$ and $u_k,\cdots,u_{N-1}$ denote the $N-k$ indices in $\mathcal U^c$.
Thus,  the $l$-th node selects one index from the set $\{u_0, \cdots, u_{k-1}\}$ at the $t$-th iteration with probability exceeding $1- \delta$. Since each node in $\mathcal G$ selects an index from $\{u_0, \cdots, u_{k-1}\}$ with probability exceeding $1-\delta$ by performing step $2$ in Algorithm \ref{algo1} when $M\geq M_1(\delta)$, there should be at least two nodes which estimate the same index at a given iteration $t$ when $L >   k$.
When there are at least two indices with the same value in $\alpha_l(t)$, we set $\lambda_l^*(t)=\alpha_l^*(t)$ where $\alpha_l^*(t)$ contains all the indices which have more than one occurrence  in $\alpha_l(t)$.


Let $p_{l,t} = Pr(\lambda_l(t)\in \mathcal U)$ which is determined by the statistical  properties of the projection matrix and the noise at the $l$-th node for $l=0,\cdots, L-1$.
Since no index is selected twice in Algorithm \ref{algo1}, at the $t$-th iteration,  we have,
\begin{eqnarray}
&~&Pr(\lambda_l(t)\in \mathcal U^c)\nonumber\\
& = &\underset{u_m\in \mathcal U^c\setminus \hat{\mathcal U}_{l}(t-1)}\sum Pr(\lambda_l(t) =u_m)  \nonumber\\
&\geq&  |\mathcal U^c\setminus \hat{\mathcal U}_{l}(t-1)| \underset{u_m\in \mathcal U^c\setminus \hat{\mathcal U}_{l}(t-1)}{\min} ~ Prob(\lambda_l(t) =u_m)\nonumber\\
&\geq&  (N-2k) \underset{u_m\in \mathcal U^c\setminus \hat{\mathcal U}_{l}(t-1)}{\min} ~ Prob(\lambda_l(t) =u_m)\label{p_lamda_2}
\end{eqnarray}
 where the  last inequality is because $|\mathcal U^c\setminus \hat{\mathcal U}_{l}(t-1)|\geq (N-2k)$.
Since $p_{l,t} = Pr(\lambda_l(t)\in \mathcal U)$ so that $Pr(\lambda_l(t)\in \mathcal U^c) = 1- p_{l,t}$ and (\ref{p_lamda_2}), we have,
\begin{eqnarray}
\underset{u_m\in \mathcal U^c\setminus \hat{\mathcal U}_{l}(t-1)}{\min} ~ Prob(\lambda_l(t) =u_m) \leq \frac{1}{N-2k}  (1-p_{l,t})
\end{eqnarray}
It is noted that when the absolute values of non significant coefficients of the sparse signals are almost zero,  we can approximate  $\underset{u_m\in \mathcal U^c\setminus \hat{\mathcal U}_{l}(t-1)}{\min} ~ Prob(\lambda_l(t) =u_m) \approx (1-\nu_l)\underset{u_m\in \mathcal U^c\setminus \hat{\mathcal U}_{l}(t-1)}{\max} ~ Prob(\lambda_l(t) =u_m)$ for some $0\leq \nu_l \ll 1$. Then we have,
\begin{eqnarray}
\underset{u_m\in \mathcal U^c\setminus \hat{\mathcal U}_{l}(t-1)}{\max} ~ Prob(\lambda_l(t) =u_m) \leq \frac{1}{k(1+\nu_l)} p_{l,t}
\end{eqnarray}
resulting in
\begin{eqnarray}
Pr(\lambda_l(t)=\tilde\lambda(t))|_{\tilde\lambda(t)\in \mathcal U^c} &\leq& \underset{u_m\in \mathcal U^c\setminus \hat{\mathcal U}_{l}(t-1)}{\max} ~ Prob(\lambda_l(t) =u_m)\nonumber\\
&\leq & \frac{1}{(N-2k)(1-\nu_l)} (1- p_{l,t}) \label{lamda_u}
\end{eqnarray}
for $0\leq \nu_l \ll 1$ and $\tilde\lambda(t))$ can take any value from $u_0, \cdots, u_{N-1}$. Similarly, for any other node in $\mathcal G$, we have,
\begin{eqnarray}
Pr(\lambda_j(t)=\tilde\lambda(t))|_{\tilde\lambda(t)\in \mathcal U^c} &\leq& \underset{u_m\in \mathcal U^c\setminus \hat{\mathcal U}_{j}(t-1)}{\max} ~ Prob(\lambda_j(t) =u_m)\nonumber\\
&\leq & \frac{1}{(N-2k)(1-\nu_j)} (1- p_{j,t}) \label{lamda_u}
\end{eqnarray}
for $0\leq \nu_j \ll 1$ and $j\in \mathcal G_l$.

 For a given node $l$, when $ M \geq M_1(\delta)$ we have $p_{l,t}\geq 1-\delta$. With the assumption that $\nu_j << 1$, we may approximate (\ref{lamda_u}) by,
 \begin{eqnarray}
 Pr(\lambda_j(t)=\tilde\lambda(t))|_{\tilde\lambda(t)\in \mathcal U^c}\leq\frac{\delta}{(N-2k)} \label{lamda_l_more_2}
\end{eqnarray}
when $M\geq M_1(\delta)$.
Thus, whenever there are $n_l(\tilde{\lambda}(t))$ number of nodes in $\alpha_l(t)$ that estimate the same index $\tilde{\lambda}(t)$, the probability that the corresponding index does not belong  to the true support is upper  bound by
\begin{eqnarray}
&~&Pr(\lambda_l^*(t)=\tilde\lambda(t))|_{\tilde\lambda(t)\in \mathcal U^c} \nonumber\\
&=&  \prod_{j=1}^{n_l(\tilde{\lambda}(t))}
Pr(\lambda_j(t)=\tilde\lambda(t))|_{\tilde\lambda(t)\in \mathcal U^c} \leq \left(\frac{\delta}{(N-2k)}\right)^{n_l(\tilde{\lambda}(t))}.  \label{lamda_l_more}
\end{eqnarray}
When $L > k$, we have $|n_l(\tilde{\lambda}(t))|\geq 2$ when $M \geq M_1(\delta)$. Thus, we have,
\begin{eqnarray}
Pr(\lambda_l^*(t)=\tilde\lambda(t))|_{\tilde\lambda(t)\in \mathcal U^c}
&\leq& \left(\frac{\delta}{(N-2k)}\right)^2.
\end{eqnarray}
Thus, taking the union bound we get,
\begin{eqnarray}
Pr(\lambda_l^*(t)\in \mathcal U^c) \leq  \delta^2 \left(\frac{(N-k)}{(N-2k)^2}\right).
\end{eqnarray}
Further, when $k << N $, we may approximate $\frac{k}{(N-2k)^2}\rightarrow 0$. Thus, we have,
\begin{eqnarray}
Pr(\lambda_l^*(t)\in \mathcal U^c) \leq   \frac{\delta^2}{(N-2k)}
\end{eqnarray}completing the proof.
\end{proof}
Thus, with the same number of measurements per node, the index fusion stage in DC-OMP 1 improves the performance of sparsity pattern recovery significantly compared to performing only the standard OMP. The performance gain is illustrated in the numerical results section.

\subsubsection{DC-OMP 2}
In contrast to DC-OMP 1, in DC-OMP 2, the initial estimate in Phase I is obtained via \emph{one-hop communication} and the index fusion in Phase II is performed via   \emph{global communication}. Thus,  the estimates obtained during  both phases are more accurate in DC-OMP 2 than those  in DC-OMP 1.  Further, due to global communication during  Phase II, each node performing  DC-OMP 2 has the same estimate  for the sparsity pattern when the algorithm terminates. When the one-hop neighborhood in Phase I in DC-OMP 2 becomes the whole network, the performance of DC-OMP 2 coincides with  S-OMP being performed at each node. However, due to the index fusion stage in Phase II, DC-OMP 2 terminates with less number of iterations compared to S-OMP.   Further, with a reasonable neighborhood size for local communication in Phase I, DC-OMP 2 provides  performance that is comparable to S-OMP as observed from Simulation results. It is noted that, S-OMP  requires the global knowledge of the observations and the measurement matrices at each node, while  in \emph{global communication} in Phase II in DC-OMP 2, only one index is transmitted  at each iteration.

\subsection{Communication complexity}
To analyze  the communication complexity,  we concentrate on the amount of information to be transmitted by each node and whether that information is transmitted to only one hop neighbors or to the whole network.  Communication complexity of the two proposed decentralized algorithms are compared with two extreme cases: performing S-OMP at each node and performing standard OMP at each  node independently and then fusing  the estimated supports to get a global estimate (D-OMP).

\subsubsection{S-OMP}
When S-OMP as stated in Algorithm \ref{algo_SOMP} is performed at each node, each node  is required to transmit $kN$ messages  to the whole network after $k$ iterations. Thus, the communication burden in terms of the total amount of information to be exchanged in the network is $L(L-1) kN$.

\subsubsection{D-OMP}
In D-OMP, the standard OMP is performed at each node based on its own measurement vector to estimate the support set, and the estimated supports sets are fused via the majority rule to get the final estimate. Each node is required to transmit $k$ indices, thus the communication complexity is in the order of $\mathcal O(k(L-1)L)$.

\subsubsection{DC-OMP 1}
In DC-OMP 1, each node has to communicate only one index to the neighborhood at a given iteration, thus, communication cost  per node after $T_1 \leq k$ iterations is in the order of $\mathcal O(T_1)$. The total number of messages to be transmitted by all the sensors assuming that each node talks to the neighboring nodes one by one, is $\sum_{l=0}^{L-1}|\mathcal G_l| T_1$.   It is also noted that for sufficiently small networks (such as cognitive radios with 10-20 radios) the assumption that $\mathcal G_l\cup l =\mathcal G$ for all $l$ is also reasonable in performing DC-OMP 1 since it is required to transmit only one index to the whole network at each iteration with a maximum of $T_1 < k$ iterations. Then the total number of messages transmitted by all nodes is $(L-1)L T_1$.

\subsubsection{DC-OMP 2}
In DC-OMP 2, each node has to transmit $N$ values to its one-hop  neighbors during  each iteration during Phase I. Thus, after $T_2 \leq k$ iterations, each node  requires $\mathcal O(T_2 N)$ transmissions in the neighborhood. In this phase, it is desirable to have as  small a number of neighbors as possible since $N$ messages are to be transmitted per iteration.   During Phase II, each node exchanges  one  index with  the whole network, thus after $T_2 \leq k$ iterations, each requires $T_2$ transmissions to the whole network. Thus, the total number  of messages transmitted by all nodes is $\sum_{l=0}^{L-1}|\mathcal G_l| NT_2 + L (L-1)T_2$. Compared to performing S-OMP at each node, in DC-OMP 2, communication of length-$N$ messages is restricted to the one-hop neighbors at each node.

It is also noted that the communication complexity above  is computed assuming that each node communicates with  the other nodes one by one. The communication complexity can be further reduced if an efficient broadcast mechanism is used.

Communication complexities in terms of the amount of information to be transmitted   are  summarized in Table \ref{table_comparison}. As defined before, $T_1, T_2 (\leq k)$ denotes the number of iterations required for the algorithm to be terminated with DC-OMP 1 and DC-OMP 2, respectively.

\subsection{Comparison with  optimization based decentralized approach \cite{Zeng1}}
We further compare the communication complexity with the most related  decentralized algorithm for common sparse recovery as considered in \cite{Zeng1}.
As observed in Algorithm 1 in \cite{Zeng1}, to implement the decentralized sparsity pattern  recovery, each node is required to  iteratively solve a  quadratic optimization problem   in an iterative manner to get an estimate of the sparsity pattern. In contrast, in the proposed decentralized algorithms in this paper,  the computational complexity is dominated by the greedy selection stage (which requires less computational complexity compared to performing quadratic  optimization).  Further, in \cite{Zeng1}, at a given iteration, each node is required to communicate a length-$N$ vector to its one hop neighbors, thus the communication complexity in terms of the total number messages to be transmitted by all nodes is $\sum_{l=0}^{L-1}|\mathcal G_l| NT_0$ where $T_0$ is the number of iterations which is different from DC-OMP 1 and DC-OMP 2. It is noted that while DC-OMP 2 has similar communication complexity,  DC-OMP 1 requires much less communication overhead compared to this scheme. Thus, in addition to the computational gain at each node, the communication complexity is less (in DC-OMP 1) or in the same order (in DC-OMP 2) in the proposed algorithms compared to  the optimization based approach presented in \cite{Zeng1}.

\begin{table}\label{table_comparison}
\caption{Comparison of communication complexity of the proposed algorithms  ($T_1, T_2 \leq k <  M  < N$):}
\centering
\centering
\begin{small}
\begin{tabular}{|l|l|l|}
  \hline
  Algorithm   & global commun. & local commun.\\
  \hline
  S-OMP & $L(L-1) kN$  & $-$\\
  D-OMP (with no collab.) & $k(L-1)L$ & $-$\\
  DC-OMP 1 &  $-$ & $\sum_{l=0}^{L-1}|\mathcal G_l| T_1$ \\
DC-OMP 2 &  $L (L-1)T_2$ & $\sum_{l=0}^{L-1}|\mathcal G_l| T_2 N$ \\
    \hline
 \end{tabular}
\end{small}
\end{table}


\section{Simulation Results}\label{numerical}
For simulation purposes, we consider that each sampling  matrix $\mathbf A_l$ is  a random orthoprojector so that $\mathbf A_l \mathbf A_l ^T = \mathbf I_M$ for $l=0,1,\cdots, L-1$.  We illustrate the performance using the probability of exact support recovery, $P_d$,  which is  defined as
\begin{eqnarray}
P_d = Pr (\hat {\boldsymbol \zeta} =  {\boldsymbol \zeta})
\end{eqnarray}
where $\boldsymbol \zeta$ is defined in (\ref{beta_def}) and $\hat {\boldsymbol \zeta}$ contains binary elements $\{0,1\}$ in which $1$'s corresponding to the locations in $\hat{\mathcal U}$.
\begin{figure}[htb]
\centerline{\epsfig{figure=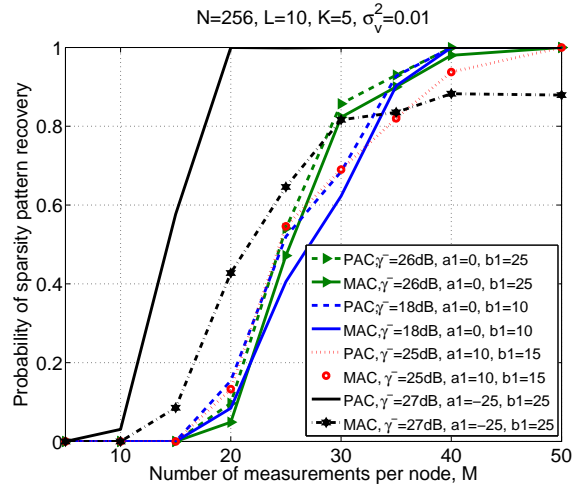,width=8.50cm}}
\caption{Performance of  sparsity pattern recovery via OMP with MAC output (\ref{MAC_same}) and PAC output (\ref{obs_PAC}); $N=256$, $k=5$, $L=10$}
\label{fig_MAC_PAC}
\end{figure}
\subsection{Sparsity pattern recovery with the MAC output}
First,  we illustrate the capability of the MAC transmission scheme (\ref{MAC_same}) in sparsity pattern recovery. In Fig. \ref{fig_MAC_PAC}, we consider the  case where $\mathbf A_l=\mathbf A$ and different choices for the sparse signals $\mathbf s_l$ for $l=0,\cdots, L-1$. With (\ref{MAC_same}), the sparsity pattern recovery  is performed via the standard OMP. We compare the results with the sparsity pattern recovery via S-OMP when all the observations are available at the fusion center (i.e. as in (\ref{obs_PAC})).  In Fig. \ref{fig_MAC_PAC}, we plot the probability of complete support recovery vs the number of measurements per node $M$ when the nonzero elements are drawn from a uniform distribution in the range $[a_1, ~ b_1]$ with different values for $a_1$ and $b_1$. The variance of the measurement noise $\sigma_v^2=0.01$. The average SNR is defined as  $\bar\gamma = \frac{1}{L}\sum_{l=0}^{L-1} \frac{||\mathbf s_l||^2}{N \sigma_v^2}$. In Fig. \ref{fig_MAC_PAC}, the dimension  of the sparse signals is taken as $N=256$,  the sparsity index $k=5$ and the number of nodes  $L=10$. It is observed that when $a_1$ and $b_1$ are such that $\mathbf s_l$'s have the same sign, the performance gap between the MAC  and the PAC outputs in common sparsity pattern recovery  is not significant even  though their amplitudes can differ significantly. In other words, when the fusion center does not have separate observations vectors but has only their  linear combination,  the sparsity pattern recovery can still  be performed reliably (with almost the same performance as the case where all observations are available)   when the coefficients of the sparse signals have the same sign.  However, as claimed in Section \ref{MAC_recovery}, the performance of the sparsity pattern recovery based on  the MAC output (\ref{MAC_same}) is not comparable with the performance with the PAC output (\ref{obs_PAC}) when the nonzero elements of $\mathbf s_l$'s have zero mean (i.e. $a_1=-25$ and $b_1=25$ in Fig. \ref{fig_MAC_PAC}).

\begin{figure}[htb]
\centerline{\epsfig{figure=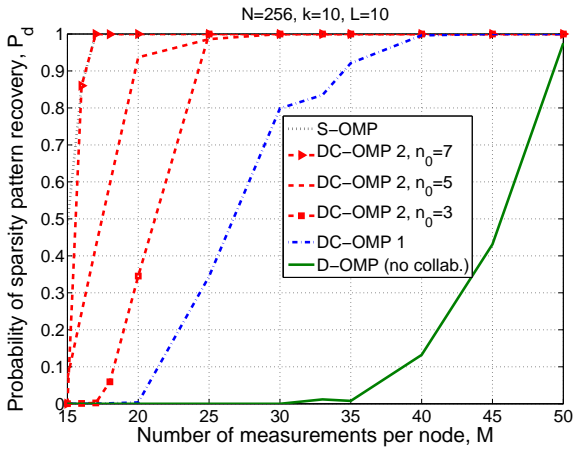,width=8.50cm}}
\caption{Performance of the decentralized sparsity pattern recovery with Algorithms \ref{algo_DC-OMP1} and \ref{algo1}; $N=256$, $k=10$, $L=10$, $\bar\gamma  = \frac{1}{L}\sum_{l=0}^{L-1} \frac{||\mathbf s_l||^2}{N \sigma_v^2} = 28 dB$}
\label{fig_1}
\end{figure}

\begin{figure}[htb]
\centerline{\epsfig{figure=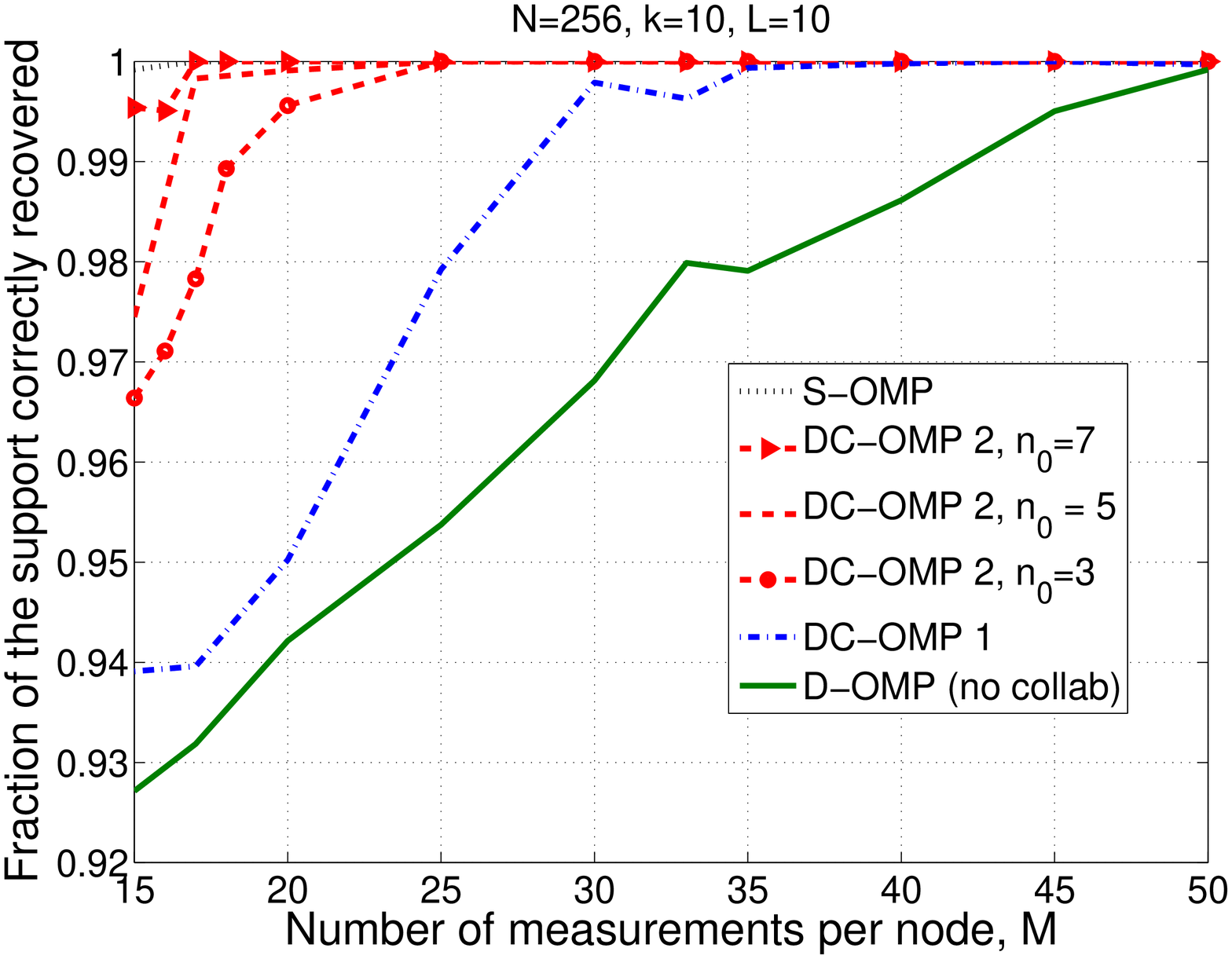,width=8.50cm}}
\caption{Fraction  of the support correctly recovered form  Algorithms \ref{algo_DC-OMP1} and \ref{algo1}; $N=256$, $k=10$, $L=10$, $\bar\gamma  = \frac{1}{L}\sum_{l=0}^{L-1} \frac{||\mathbf s_l||^2}{N \sigma_v^2} = 28 dB$}
\label{fig_2}
\end{figure}
\subsection{Performance of DC-OMP 1 and DC-OMP 2 }
To compare the performance of the proposed  Algorithms \ref{algo1} and \ref{algo_DC-OMP1}  with other comparable approaches, we consider two existing  benchmark cases. (i). Distributed OMP with no collaboration (D-OMP), in this case, each node performs standard OMP in Algorithm \ref{algo0} independently  to obtain the support set estimate $\hat{\mathcal U}_l$. To fuse the estimated  support sets, $\hat{\mathcal U}_l$'s,  at individual nodes,  each node transmits indices in $\hat{\mathcal U}_l$ to the rest of the nodes, and employs  a majority rule based fusion scheme to obtain a final estimate  $\hat{\mathcal U}$.
(ii). S-OMP \cite{Tropp2}: S-OMP algorithm as stated in Algorithm  \ref{algo_SOMP} is carried out at each node where each node has the access to all  the information residing at at every other node.

In Figures \ref{fig_1} and \ref{fig_2}, we let $N=256$, $k=10$ and $L=10$. Non zero coefficients of sparse vectors are generated  as realizations of a uniform random variable with the support $[10,15]$ and the variance of the measurement noise $\sigma_v^2=0.01$. The average SNR  is taken as $\bar\gamma  = 28 dB$.
In Figures \ref{fig_1}-\ref{fig_03}, for DC-OMP 1, we consider  the case where $\mathcal G_l\cup l =\mathcal G$ for $l=0,1,\cdots, L-1$. Thus, it shows the maximum performance achievable via DC-OMP 1 and also each node has the  same estimated support set  based on Algorithm \ref{algo1}.  For DC-OMP 2 in Algorithm \ref{algo_DC-OMP1}, we consider different  neighborhood sizes for local communication phase at each node as $|\mathcal G_l|=n_0 = 3,5, 7$ for $l=0,1,\cdots,L-1$.  Since in DC-OMP 2, index fusion stage is performed via global communication, each node has the same estimate after $T_2$ iterations.

We plot the probability of correctly recovering the full  support set, $P_d$, and the fraction  of the support set that is estimated correctly vs $M$ in Figures \ref{fig_1} and \ref{fig_2}, respectively. Results are obtained  by performing $10^4$ runs and averaging over $50$ trials.  It can be seen from Figures \ref{fig_1} and \ref{fig_2} that the two  proposed  decentralized algorithms outperform  D-OMP with no collaboration. The performance of DC-OMP 2 gets closer to S-OMP as  the neighborhood size for local communication phase increases.
While  a considerable performance gain over D-OMP with no collaboration is achieved,  DC-OMP 1 still has a certain performance gap compared to S-OMP and DC-OMP 2. It is noted that,   DC-OMP 2 performs measurement fusion within a neighborhood prior to  index fusion based on the estimated indices at all nodes in the network. On the other hand, DC-OMP 1 estimates indices based on each node's own observations and only the estimated indices are fused. Thus, as discussed in Table \ref{table_comparison}, DC-OMP 1 incurs   smaller communication overhead compared to DC-OMP 2. As $M$ increases, $P_d$  converges to 1 in all algorithms. This is because when  the number of compressive measurements at each node increases,  OMP (with or without collaboration) works better and recovers the sparsity pattern almost exactly with even a single measurement vector.  In resource constrained distributed networks, especially in sensor networks, it is desirable to perform the desired task by employing  less measurement data  (i.e. with small $M$) at each node distributing the computational complexity among nodes to save the overall node power, and the proposed algorithms are promising in this case compared to performing OMP at each node independently (based on Algorithm \ref{algo0})  followed by  fusion.

\begin{figure}
\centerline{\epsfig{figure=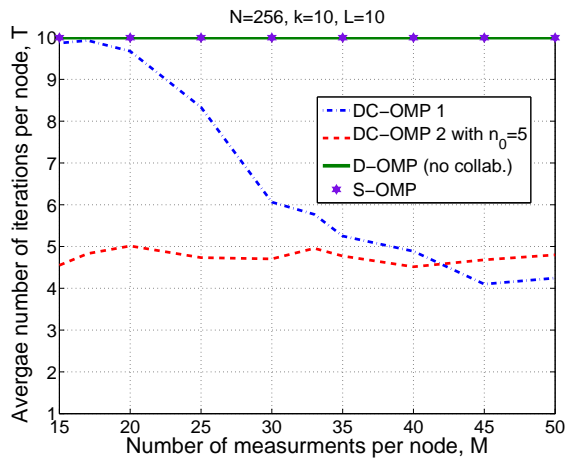,width=8.50cm}}
\caption{Number of iterations required for algorithm to be terminated at each node for Algorithms \ref{algo_DC-OMP1} and \ref{algo1}; $N=256$, $k=10$, $L=10$, $\bar\gamma  = \frac{1}{L}\sum_{l=0}^{L-1} \frac{||\mathbf s_l||^2}{N \sigma_v^2} = 28 dB$ }
  \label{fig_3}
\end{figure}

To further illustrate the efficiency of the proposed algorithms, in Fig. \ref{fig_3}, we plot the average number of iterations required by  DC-OMP 1 and DC-OMP 2  to terminate at  each node. It is observed from Fig. \ref{fig_3} that, DC-OMP 2 conducts a smaller  number of iterations per node compared to sparsity index  ($\approx k/2$) before algorithm terminates for all values of $M$. As $M$ increases, the number of iterations per node required for Algorithm  DC-OMP 1 to terminate  also reduces  considerably  compared to the sparsity index.  Although, as $M$ increases, D-OMP also correctly recovers the sparsity pattern, DC-OMP 1 does it with very few iterations per node.   D-OMP with no collaboration requires $k$ iterations at each node irrespective of the value of $M$.

Next, we demonstrate the performance of the proposed algorithms as the number of nodes in the network, $L$, increases while keeping the number of measurements per node, $M$, at a  fixed value. In Figures \ref{fig_01}-\ref{fig_02}, we plot, the probability of correctly recovering the common sparsity pattern, and the fraction  of the support correctly recovered, respectively, vs $L$. For the local communication phase of  DC-OMP 2 we assume that the neighborhood size for each node as $L/2$.  We set $M=30$, $N=256$, $k=10$ and $\bar\gamma  = \frac{1}{L}\sum_{l=0}^{L-1} \frac{||\mathbf s_l||^2}{N \sigma_v^2} = 28 dB$.

\begin{figure}[htb]
\centerline{\epsfig{figure=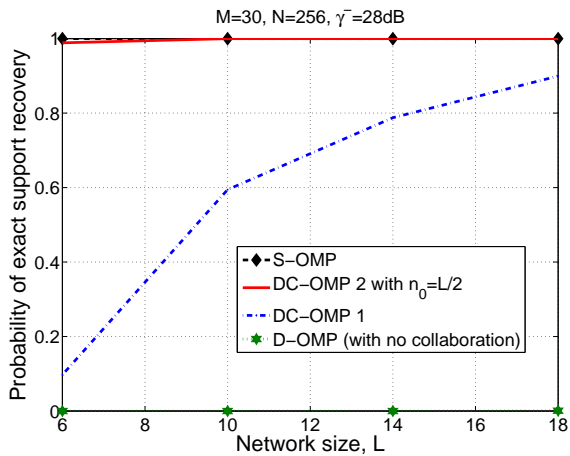,width=8.50cm}}
\caption{Probability of exact  sparsity pattern recovery vs $L$; $N=256$, $k=10$, $M=30$, $\bar\gamma  = \frac{1}{L}\sum_{l=0}^{L-1} \frac{||\mathbf s_l||^2}{N \sigma_v^2} = 28 dB$}
\label{fig_01}
\end{figure}

\begin{figure}[htb]
\centerline{\epsfig{figure=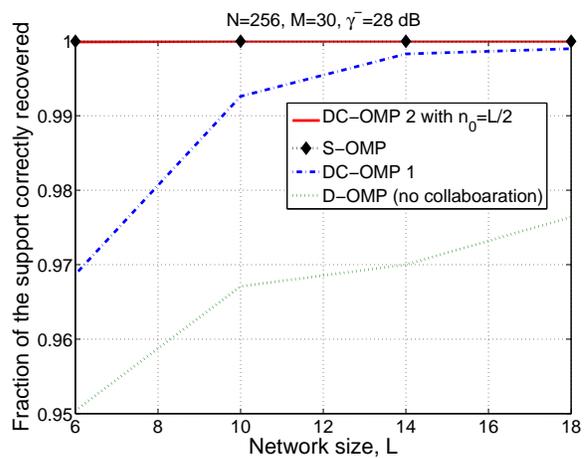,width=8.50cm}}
\caption{Fraction  of the support correctly recovered vs $L$; $N=256$, $k=10$, $M=30$, $\bar\gamma  = \frac{1}{L}\sum_{l=0}^{L-1} \frac{||\mathbf s_l||^2}{N \sigma_v^2} = 28 dB$}
\label{fig_02}
\end{figure}


From Fig \ref{fig_01} and \ref{fig_02}, it can be seen that the performance of both proposed algorithms DC-OMP 1 and DC-OMP 2 improves as $L$ increases. In particular, DC-OMP 2 correctly recovers the sparsity pattern and  yields  similar performance  as  S-OMP with relatively very  small $L$ for a given $M$.  The performance of DC-OMP 1, which performs only index fusion, is improved significantly as $L$ increases for a given small value of $M$ (slightly greater than $k$). The performance of D-OMP with no collaboration does not improve considerably as $L$ increases.  It is noted that in D-OMP (with no collaboration), individual nodes estimate the support by running standard OMP at each node independently, and then the estimated support sets are fused to get a global estimate. Since the fusion is performed after estimating the support sets, the performance of D-OMP is ultimately limited by the number of compressive measurements per node, $M$. However, in DC-OMP 1, estimated indices are fused at each iteration and when the number of nodes increases, more accurate indices for the true support can be estimated by step 4 in DC-OMP 1 in Algorithm \ref{algo1} at each iteration.

\begin{figure}
\centerline{\epsfig{figure=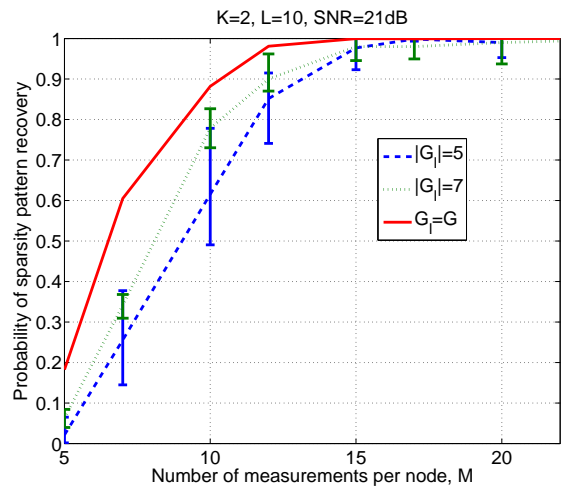,width=8.50cm}}
\caption{Performance of DC-OMP 1 as the size of one hop neighborhood varies; $N=256$, $k=2$, $L=10$, $\bar\gamma  = \frac{1}{L}\sum_{l=0}^{L-1} \frac{||\mathbf s_l||^2}{N \sigma_v^2} = 21 dB$ }
  \label{fig_dcomp2}
\end{figure}


In Figures \ref{fig_1}-\ref{fig_02},  we assumed that $|\mathcal G_l\cup l|=L$ for DC-OMP 1   so that when the algorithm  terminates each node has the same estimated support set. In Fig. \ref{fig_dcomp2}, we compare the performance of DC-OMP 1 when $|\mathcal G_l\cup l|< L$ and the one hop neighborhood size varies. In Fig. \ref{fig_dcomp2}, we plot the average  probability of sparsity pattern recovery as the neighborhood size varies where the average is taken over the nodes. In Fig. \ref{fig_dcomp2}, we let $k=2$, $L=10$ and average SNR $\bar{\gamma}=21 dB$.   We assume that each node has the same number of one-hop neighbors. We also, plot the errorbars which represent the minimum and the maximum deviations of the probability of sparsity pattern recovery over individual nodes. It can be observed that, as the number of one hop neighbors increases, the length of the error bars decreases, thus, each node has almost the same performance in sparsity pattern recovery. Further, it is observed that Algorithm \ref{algo1} when $|\mathcal G_l| \geq k$  converges  with almost the same number of measurements per node as required when $\mathcal G_l=\mathcal G$.

\section{Conclusion} \label{conclusions}
 In this paper, we considered the problem of recovering a common  sparsity pattern  in a  distributed network under communication constraints in  a centralized as well as decentralized manner.  In a centralized setting, the communication constraint is taken into account by employing a shared multiple access channel to forward  observations  at each node to a central fusion center. Then, we showed that under certain conditions, the sparsity pattern can be reliably  recovered  based on the MAC output with  performance that is comparable  to the case when the fusion center has  access to all the observations separately.

We further proposed two decentralized OMP based algorithms for joint  sparsity pattern recovery without depending on a central fusion center. These algorithms  use  minimal communication and computational burden at each node. More specifically, in the proposed algorithms, collaboration and fusion are exploited at different stages of the standard OMP algorithm to have a more accurate estimate for the common sparsity pattern with small number of compressive measurements per node as well as  less communication burden.

 In future work, we will develop decentralized algorithms for simultaneous  approximation of sparse signals with structured and more complex joint sparsity models. We will further investigate the impact of  channel effects on the joint sparse recovery in distributed networks.


\bibliographystyle{IEEEtran}
\bibliography{IEEEabrv,bib1}

\newpage

\end{document}